\theoremstyle{plain}
\newtheorem{thm}{Theorem}[section]
\newtheorem{prop}[thm]{Proposition}
\newtheorem{coro}[thm]{Corollary}
\theoremstyle{definition}
\newtheorem{defn}[thm]{Definition}
\theoremstyle{remark}
\newtheorem{remark}[thm]{Remark}
\newcommand{\tr}[2]{\text{Tr}_{#1}\!\left(#2\right)}
\newcommand{\ket}[1]{| #1 \rangle}
\newcommand{\bra}[1]{\langle #1 |}
\newcommand{\ketbra}[1]{\ket{#1}\bra{#1}}
\newcommand{\dimension}[1]{\text{dim}(#1)}
\def\N{\mathcal{N}}
\def\H{\mathcal{H}}
\def\L{\mathcal{L}}
\def\pure{|\psi\rangle}
\def\qure{|\Psi\rangle}
\def\dpure{|\psi\rangle\langle\psi|}
\newcommand{\supp}{{\rm supp}}
\newcommand{\Span}{{\rm span}}
\begin{document}

\title{Uniquely determined pure quantum states need not be unique ground states \\ 
of quasi-local Hamiltonians} 

\author{Salini Karuvade}
\affiliation{\mbox{Department of Physics and Astronomy, Dartmouth 
College, 6127 Wilder Laboratory, Hanover, NH 03755, USA}}
\affiliation{Institute for Quantum Science and Technology, University of Calgary, 2500 University Drive NW, Calgary, Alberta, Canada T2N 1N4}
		
\author{Peter D. Johnson}
\affiliation{Department of Chemistry and Chemical Biology Harvard University, 12 Oxford Street, Cambridge, MA 02138, USA}
\affiliation{Zapata Computing, Inc., 501 Massachusetts Avenue,
Cambridge, MA 02139, USA}

\author{Francesco Ticozzi}
\affiliation{Dipartimento di Ingegneria dell'Informazione,
Universit\`a di Padova, via Gradenigo 6/B, 35131 Padova, Italy} 
\affiliation{\mbox{Department of Physics and Astronomy, Dartmouth 
College, 6127 Wilder Laboratory, Hanover, NH 03755, USA}}

\author{Lorenza Viola}
\affiliation{\mbox{Department of Physics and Astronomy, Dartmouth 
College, 6127 Wilder Laboratory, Hanover, NH 03755, USA}}

\begin{abstract}
We consider the problem of characterizing states of a multipartite quantum system from restricted, quasi-local information, with emphasis on uniquely determined pure states. By leveraging tools from dissipative quantum control theory, we show how the search for states consistent with an assigned list of reduced density matrices may be restricted to a proper subspace, which is determined solely by their supports. The existence of a quasi-local observable which attains its unique minimum over such a subspace further provides a {\em sufficient} criterion for a pure state to be uniquely determined by its reduced states. While the condition that a pure state is uniquely determined is necessary for it to arise as a non-degenerate ground state of a quasi-local Hamiltonian, we prove the opposite implication to be {\em false} in general, by exhibiting an explicit analytic counterexample. We show how the problem of determining whether a quasi-local parent Hamiltonian admitting a given pure state as its unique ground state is dual, in the sense of semidefinite programming, to the one of determining whether such a state is uniquely determined by the quasi-local information. Failure of this dual program to attain its optimal value is what prevents these two classes of states to coincide.
\end{abstract}

\pacs{03.65.Ud, 03.67.-a, 03.65.Ta, 03.67.Mn}

\date{\today}
\maketitle 

\section{Introduction}
\label{sec:intro}

Understanding the way in which states of subsystems (``parts'') relate to states of the system as a ``whole'' has contributed to elucidate some of the most profound differences from the classical setting since the early days of quantum mechanics \cite{Schro}, and has since remained a major theme of investigation in quantum information science.  Even in situations where the constituent subsystems represent distinguishable degrees of freedom, so that no additional constraints from quantum statistics arise, and the global quantum state may be assumed to be pure, the relationship between parts and whole remains highly non-trivial in general due to the presence of multipartite entanglement. Given access to local information, as provided by a collection of {\em reduced density matrices} (RDMs) describing subsystem states, the very problem of deciding their ``consistency'' -- namely, the existence of a valid global state, pure or mixed, whose reduced states match the input list -- is known to be QMA-complete \cite{YiKai2006}. In this work, we will assume that such a ``quantum marginal problem'' does admit a solution, and focus on the following inter-related questions: If the {\em joining set} of quantum states that share a specified list of RDMs is indeed non-empty, how can it be characterized and computed?  What conditions can guarantee that it consists of a single element, making the corresponding quantum state {\em uniquely determined among all} (UDA) possible states? What special physical significance do UDA states enjoy, in particular, can every UDA be seen as the {\em unique ground state} (UGS) of some ``locally determined'' Hamiltonian?

Elucidating the above questions has both fundamental and practical implications. On the one hand, characterizing what possible set(s) of RDMs may uniquely determine the underlying quantum state sheds light on how multipartite entanglement is distributed across different subsystems. For instance, failure of a state to be UDA by knowledge of all the RDMs of $k$ out of $N$ total parties ($2\leq k \leq N$) signifies the existence of {\em irreducible} $(k+1)$-party correlations \cite{Linden2002,Linden2002_2,Zhou2008}; among $N$-qubit pure states, most are completely determined by only two of their RDMs of just over half the parties \cite{Linden2002_2,Diosi2004,Jones2005} (in fact, knowledge of their support suffices \cite{SaliniJPA}), whereas the irreducible $N$-party correlation is nonzero only for Greenberger-Horne-Zeilinger (GHZ) type pure states \cite{Linden2002,Walck2009}. On the other hand, determining whether a state of interest is UDA relative to a specified set of RDMs that reflects a certain physical or operational constraint or, more generally, characterizing its joining set, may be crucial to enable or inform quantum tasks ranging from quantum state tomography using local data \cite{Xin2017} to quantum self-testing~\cite{Acin2008}. 

In this work, we provide a deeper understanding of the UDA property for pure states of multipartite quantum systems from a twofold standpoint. Our first contribution is to bring tools from dissipative quantum control -- in particular, quantum state stabilization under (quasi-)locality constraints \cite{Ticozzi2012,TicozziQIC2014} -- to bear on the problem of characterizing the joining set of a target quantum state $\pure$ relative to a specified list of RDMs. While the only general approach known to date entails an extensive search for compatible states over all possible density operators, we show how the search space may be exactly reduced to a specific subspace of the multipartite Hilbert space, called the \emph{Dissipatively Quasi-Locally Stabilizable (DQLS) subspace}, which is determined by the supports of the given RDMs. In addition to substantially lowering the complexity of the underlying search problem, the DQLS subspace may offer an analytical means for establishing the UDA nature of $\pure$ in principle; in particular, we show how it leads to {\em sufficient} conditions for $\pure$ to be UDA via the identification of an appropriate {\em UDA witness} -- namely, an observable whose expectation must be {\em uniquely} extremized over the DQLS subspace. 

Since QL constraints are naturally obeyed by ``few-body'' ($k$-local) Hamiltonians, it is natural to ask whether {\em any} pure state that is UDA must arise as the UGS of a Hamiltonian that respects the same constraint. As our main contribution, we prove that, perhaps surprisingly, this need {\em not} be true in general, by constructing an example of a six-qubit pure state that is UDA by its two-body RDMs yet {\em cannot} be the UGS of any two-body Hamiltonian. This answers, in the negative, a special relevant instance of a broader question posed by Chen and coworkers in \cite{Chen2012}: while they could show that the condition of a space $V$ to be ``$k$-correlated'' -- loosely speaking, to only support states that are uniquely determined by their $k$-body RDMs -- is necessary for $V$ to be a ground-state manifold of a $k$-local Hamiltonian, sufficiency was left open in general. By focusing on the limiting case where the $k$-correlated space $V$ is one-dimensional, we further establish that the problems of determining whether $\pure$ is UDA or, respectively, UGS may be cast as a {\em primal-dual pair} in the language of {\em semidefinite programming} (SDP) \cite{Boyd}. We show that while the condition of $\pure$ being UDA is necessary and sufficient for it being the UGS of a QL Hamiltonian when both the primal and dual programs attain their optimal value, no guarantee exists for this to happen in general. Specifically, we identify the failure of the dual program to achieve its maximum as the mechanism that prevents the equivalence between the UDA and UGS properties to hold in general.

The content is organized as follows. Sec. \ref{sec:prelim} introduces relevant notation and collects definitions and prior results that will be needed in our subsequent analysis. In Sec. \ref{sec:joining_set}, we establish the above-mentioned characterization of the joining set of a multipartite quantum state in terms of the DQLS subspace, by also providing a constructive procedure for reconstructing global states from local data solely based on knowledge of the {\em support} of relevant RDMs.  The concept of a {\em UDA witness} is introduced as well, and the usefulness of the proposed tools is demonstrated by showing how all $N$-qubit states that are equivalent to W states under stochastic local operations assisted by classical communication (SLOCC) are UDA relative to {\em arbitrary} non-trivial QL constraints. This recovers and unifies a number of specific results in the literature \cite{Rana,Wu2014}, by also offering a significantly more streamlined proof technique, applicable to other multipartite pure states of interest in principle. Sec. \ref{sec:equivalence} is devoted to presenting our explicit counterexample of a pure state that is analytically proved to be UDA -- via construction of an appropriate witness -- but not UGS -- by also leveraging dynamical symmetrization ideas \cite{LV99,PZ}. In Sec. \ref{sec:sdp} we cast and analyze the UDA-UGS problem within a general SDP framework, by also contrasting QL with more general linear constraints.  A brief summary along with a discussion of remaining open questions conclude in Sec. \ref{sec:end}, wheareas Appendix \ref{proofs} and Appendix \ref{sec:SDP} are devoted to presenting complete technical proofs and some basic elements of SDP, respectively. 


\section{Notation and Preliminaries}
\label{sec:prelim}

Throughout this work, we consider a multipartite quantum system consisting of $N$ distinguishable, finite-dimensional subsystems. The associated multipartite Hilbert space has a tensor product structure given by 
\[ \H = \bigotimes_{a=1}^N \H_a, {\mathrm{dim}}\,(\H_a)=d_a, \,
{\mathrm{dim}}(\H) =\prod_{a=1}^N d_a\equiv D <\infty.\]
We shall use $\mathcal{B}(\H)$ to denote the $D^2$-dimensional space of (bounded) linear operators acting on $\H$. Density operators representing the physical states of the quantum system are trace-one, positive semi-definite operators which belong to the convex space $\mathcal{D}(\H) \subset \mathcal{B}(\H)$. The extreme points of $\mathcal{D}(\H)$ correspond to the one-dimensional projectors, $\rho=\rho^2 \equiv |\psi\rangle\langle\psi|$, which are in one-to-one correspondence with state vectors $|\psi\rangle\in \H$.  

Multipartite qubit systems will have a special prominence in our discussion.  In this case, $d_a\equiv 2$, for all $a$, and $\H = \otimes_{a=1}^N \H_q \simeq ({\mathbb C}^2)^{\otimes N}$, with $\H_q =\text{span}\{ |0\rangle, |1\rangle\}$. In addition to the standard representation of $N$-qubit pure states as superposition of computational basis states \cite{Nielsen-Chuang}, a more compact notation will also be used, in terms of the ``excitations'' they contain. A qubit is said to be excited if it is in its $|1\rangle$ state.  A product state of $N$-qubits is then described by a string consisting of the indexes of the excited qubits.  For instance, 
\begin{equation}
|0\rangle_N \equiv |0\rangle^{\otimes N},\quad |2\rangle_3\equiv |010\rangle, \quad  |13\rangle_4  \equiv |1010\rangle, 
\label{compact}
\end{equation}
where the subscript in this new notation represents the total number of qubits. Given the above representation for basis states, arbitrary pure states on $N$ qubits are written as a linear combination of different excitation states. Multiqubit operators in $\mathcal{B}(\H)$ will be constructed, as usual, out of a product operator basis consisting of $\{\mathbbm{1}, \sigma^x, \sigma^y, \sigma^z\}$, where $\sigma^\alpha$ are the Pauli operators on ${\mathbb C}^2$, and we will also let $\sigma^\pm \equiv (\sigma^x \pm i \sigma^y)/2$ be raising and lowering spin-angular momentum operators. The notation $\sigma_j^\alpha$ will be used to denote the Pauli operator $\sigma^\alpha$ acting on the $j$-th qubit, that is, $\sigma_j^\alpha \equiv \mathbbm{1}\otimes\cdots\sigma^x\otimes\cdots\mathbbm{1}$, and similarly for ladder operators. 

Single-qubit Pauli operators such the ones above are the simplest example of {\em local} (or uni-local) operators. While constraints imposed from either (or both) the coupling topology and the geometry of the underlying lattice typically restrict the structure of naturally occurring or engineered many-body Hamiltonians, strictly local constraints are too restrictive in practice.  Following prior work \cite{Ticozzi2012,TicozziQIC2014,Johnson2016}, we focus on \emph{Quasi-Local (QL)} constraints, which we formalize as follows:
\begin{defn}
A {\em neighborhood} $\N_k$ is a collection of subsystem indexes given by $\N_k \subsetneq \{1,\dots, N \}$.
A {\em neighborhood structure} $\N \equiv \{\N_k \}_{k=1}^M$ is a finite collection of such neighborhoods. 
\end{defn} 
In this work, we only consider neighborhood structures that are {\em non-trivial}, that is, we require each subsystem $j$ to belong to at least some neighborhood $\N_k$ and each $\N_k$ to overlap with at least another neighborhood $\N_{k'}$.  Notice that the above QL notion includes $k$-local interactions as considered for instance in \cite{Chen2012}; in such a case, the relevant $k$-local neighborhood structure consists of neighborhoods each containing at most $k$ subsystems, with $1<k<N$. Fig.~\ref{fig:chain} illustrates the QL notion that is associated to a one-dimensional lattice (a spin chain) with two-body nearest-neighbor (NN) interactions that let us define a non-trivial neighborhood structure. 

\begin{figure}[t]
\includegraphics[width=\columnwidth]{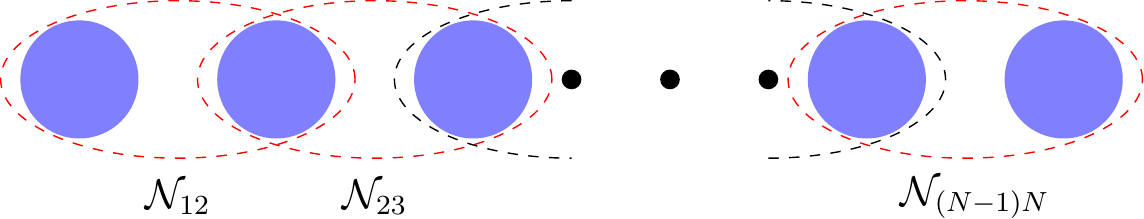}
\vspace*{-4mm}
\caption{(Color online) $N$-party spin chain with two-body NN interactions and open boundary conditions. Dotted ovals represent the neighborhoods. The neighborhood structure is given by $\N \equiv \N_{\mathrm{NN}} = \{\N_{12},\dots,\N_{(N-1)N}\}$. For periodic boundary condition, a neighborhood $\N_{1 N}$ should be included in $\N_{\mathrm{NN}} $ in addition to the above.}
\label{fig:chain}
\end{figure}

Consider a quantum state $ \rho \in \mathcal{D}(\H)$. Given a neighborhood $\N_k$, the RDM of $\rho$ on $\N_k$ is determined by $\rho_{\N_k} \equiv \tr{\overline{\N}_k}{\rho}$, where $\overline{\N}_k$ is the neighborhood complement of $\N_k$, that is, $\N_k \cup \overline{\N}_k = \{1,\dots,N \}$. While the mapping from $\rho$ to the list of RDMs $\{\rho_{\N_k},\, \forall\, \N_k \in \N\}$ is unique for fixed $\N$, the converse is generally not true. This motivates the following definition:
\begin{defn}
The \emph{joining set} of a state $\rho  \in \mathcal{D}(\H)$ relative to $\N= \{\N_k \}_{k=1}^M$, denoted $\mathcal{M}_\mathcal{N}(\rho)$, is the set of all quantum states that share the same list of RDMs on each neighborhood in $\N$ as $\rho$: 
\begin{equation*}
\mathcal{M}_\mathcal{N}(\rho) = \{\sigma \in  \mathcal{D}(\H) : \tr{\overline{\N}_k}{\rho} =  \tr{\overline{\N}_k}{\sigma}, \, \forall\, \N_k \in \N\}.
\end{equation*}
\end{defn}
\noindent
It is immediate to see that $\mathcal{M}_\mathcal{N}(\rho)$ is closed and convex. Determining whether, for a given set of RDMs on $\N$, the corresponding joining set is non-empty, that is, the RDMs are consistent with a valid (pure or mixed) quantum state, is an instance of the quantum marginal problem. The latter is known to be QMA-complete \cite{YiKai2006}.

Quantum states for which the joining set relative to a fixed neighborhood structure consists of a single element deserve special attention:
\begin{defn}
\label{def:UDA}
A quantum state $\rho\in\mathcal{D}(\H)$ is \emph{Uniquely Determined among All states (UDA)}, 
with respect to a neighborhood structure $\N = \{\N_k \}_{k=1}^M$, if there exists no other state $\sigma \in\mathcal{D}(\H)$ 
with the same set of RDMs. 
\end{defn}
\noindent
In other words, $\rho$ is UDA with respect to $\N$ if and only if  $\mathcal{M}_\mathcal{N}(\rho) = \{\rho\}$. A related notion is that of \emph{Uniquely Determined among all Pure states (UDP)} \cite{Xin2017}: A pure state $\rho = \dpure$ is UDP with respect to $\N$ if it is the only pure state belonging to $\mathcal{M}_{\N}(\rho)$.  

A useful construct that helps in understanding the structure of the joining set of a multipartite pure state of interest relative to a specified  neighborhood structure is the so-called \emph{Dissipatively Quasi-Locally Stabilizable (DQLS) subspace}. This concept was introduced in the context of QL stabilization problems in \cite{Ticozzi2012} and was further used and characterized in subsequent related analysis \cite{TicozziQIC2014,Johnson2016,TicozziAlternating,SaliniJPA}. Formally, we have the following:
\begin{defn}  
\label{def:DQLSsubspace}
The \emph{DQLS subspace} of a pure state $\pure \in \H$ relative to the neighborhood structure 
$\N = \{\N_k \}_{k=1}^M$, denoted $\H_{\N}(\pure)$, is given by:
\begin{equation}
\label{eq:DQLSsubspace}
\H_{\N}(\pure) \equiv \bigcap_{\N_k\in\N}\supp(\rho_{\N_k}\otimes I_{\overline{\N}_k}).
\end{equation}  
\end{defn}

\noindent 
From a control-theoretic standpoint, $\H_{\N}(\pure)$ represents the {\em smallest} subspace which contains $\pure$ and can be stabilized via (continuous or discrete-time) Markovian dynamics that is QL relative to $\N$. In particular, in the continuous-time setting, the Lindblad generator is purely dissipative (has vanishing Hamiltonian contribution) in its standard form. 
The DQLS subspace enjoys a number of properties that will be relevant to our analysis:

\smallskip

${\bm{(i)}}$ For given $\pure$ and $\N$, we have 
\begin{equation}
\label{eq:DQLSstate}
\H_{\N}(\pure) = \Span\{\pure \}
\end{equation}
if and only if $\pure$ is the {\em unique ground state} (UGS) of a frustration-free parent Hamiltonian $H$ that is QL relative to $\N$~\cite{Ticozzi2012,Johnson2016}.  That is, we may express $H$ as a sum of neighborhood operators, 
$H\equiv \sum_k H_k= \sum_k H_{\N_k}\otimes I_{\overline{\N}_k}$ and, in addition, the ground space of $H$ is contained in the ground-state space of {\em each} such $H_k$. Accordingly, if Eq. \eqref{eq:DQLSstate} holds, the DQLS subspace is one-dimensional. 

\smallskip

${\bm{(ii)}}$ More generally, if two pure states $\pure$ and $|\phi\rangle$ are equivalent under SLOCC transformations, their DQLS subspaces, relative to any non-trivial neighborhood structure $\N$, have the same dimensionality \cite{SaliniJPA}. 

\smallskip

${\bm{(iii)}}$ Furthermore, as established in \cite{Johnson2016,SaliniJPA}, we have 
\begin{equation}
\supp(\sigma) \subseteq \H_{\N}(\pure), \quad \forall \,\sigma \in \mathcal{M}_{\N}(\pure).
\label{eq:supp0}
\end{equation}
That is, all states in the joining set of a pure state $\rho=\pure\langle\psi|$ relative to $\N$ have their support contained in the corresponding DQLS subspace. 

\smallskip

As we will now show, the DQLS subspace may be leveraged to gain insight about the joining set of general, mixed quantum states as well.

\section{Characterizing the joining set of a multipartite quantum state}
\label{sec:joining_set}

Given a quantum state and a neighborhood structure of interest, no efficient procedure is available to determine whether the state is UDA or construct its joining set in general. In specific scenarios it is possible to make headway by exploiting structural or symmetry properties of the state: for instance, the extent to which {\em generic} pure states in $N$-party systems are UDA by appropriate subsets of RDMs has been extensively investigated \cite{Linden2002_2,Jones2005,Diosi2004,SaliniJPA}; likewise, conclusions have been reached on a case-by-case basis for entangled states encompassing $N$-qubit GHZ-states and their local unitary equivalents \cite{Walck2009}, stabilizer and Dicke states~\cite{Wu20151}, along with generalized W states as we will also further discuss in Sec. \ref{sec:Wstate}.

In principle, the most straightforward way to characterize the joining set $\mathcal{M}_{\N}(\rho)$ of $\rho$ relative to $\N$ is to numerically search over all the possibilities in the associated density operator space $\mathcal{D}(\H)$. However, such a brute-force procedure becomes exponentially harder with increasing $N$. For particular instances this search space may be reduced to a certain extent. For example, if $\pure$ is known to be UDP relative to $\N$, the joining set $\mathcal{M}_{\N}(\pure)$ may be computed by searching {\em only} in the set of genuinely mixed states. Similarly, under the promise that a given set of RDMs in $\N$ corresponds to a pure state that is UDP, one can reconstruct such a global state by searching over the set of pure states alone. Still, such prior information is not always available. Here, we show how, by leveraging the DQLS subspace concept, we can both simplify the problem of characterizing $\mathcal{M}_{\N}(\rho)$ for arbitrary $\rho$ and $\N$ by reducing the search space, and obtain sufficient conditions for a pure state to be UDA.

\subsection{Reduced search space and UDA witnesses}
\label{sub:uda_general} 

As a first step, we extend Definition \ref{def:DQLSsubspace} to general quantum states: the {\em DQLS subspace of $\rho\in {\mathcal D}(\H)$ relative to $\N$} which, with slight abuse of notation, we shall still denote as $\H_{\N}(\rho)$, is given by
\begin{equation}
\H_{\N}(\rho) \equiv  \bigcap_{\N_k\in\N}\supp(\rho_{\N_k}\otimes I_{\overline{\N}_k} ). 
\label{eq:subs}
\end{equation}
\noindent 
Through the following theorem, we show that the desired ``reduced search space'' for UDA characterization is indeed related to the above DQLS subspace. 
\begin{thm}
\label{th:DQLSsubspace}
Let $\rho \in \mathcal{D}(\H)$, with associated joining set $\mathcal{M}_{\N}(\rho)$ relative to a non-trivial neighborhood structure $\N = \{\N_k \}_{k=1}^M$. Then every state in the joining set has support on the corresponding DQLS subspace:
\begin{equation}
\supp(\sigma) \subseteq \H_{\N}(\rho), \quad \forall \,\sigma \in \mathcal{M}_{\N}(\rho).
\label{eq:supp1}
\end{equation}
\end{thm}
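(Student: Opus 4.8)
The plan is to reduce the assertion to a neighborhood-by-neighborhood containment and then intersect. Fix any $\sigma \in \mathcal{M}_{\N}(\rho)$ and any neighborhood $\N_k \in \N$. By definition of the joining set, $\sigma$ and $\rho$ have the same RDM on $\N_k$, i.e. $\sigma_{\N_k} = \rho_{\N_k}$; hence it suffices to prove the purely bipartite statement that, for \emph{any} state $\tau \in \mathcal{D}(\H_A \otimes \H_B)$, one has $\supp(\tau) \subseteq \supp(\tau_A \otimes I_B)$, applied with the identification $A \leftrightarrow \N_k$, $B \leftrightarrow \overline{\N}_k$, $\tau \leftrightarrow \sigma$. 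Granting this, $\supp(\sigma) \subseteq \supp(\sigma_{\N_k}\otimes I_{\overline{\N}_k}) = \supp(\rho_{\N_k}\otimes I_{\overline{\N}_k})$ for every $k$, and intersecting over all $\N_k \in \N$ gives $\supp(\sigma) \subseteq \bigcap_{\N_k \in \N}\supp(\rho_{\N_k}\otimes I_{\overline{\N}_k}) = \H_{\N}(\rho)$, which is exactly the claim.

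The bipartite statement is where the (elementary) work sits. I would let $P_A$ be the orthogonal projector onto $\supp(\tau_A)$, so that $P_A\tau_A = \tau_A$ and therefore $\mathrm{Tr}\,\big\{[(I_A-P_A)\otimes I_B]\,\tau\big\} = \mathrm{Tr}\,[(I_A - P_A)\,\tau_A] = 0$. Now $X \equiv (I_A-P_A)\otimes I_B$ and $\tau$ are both positive semidefinite, so I would invoke the standard fact that $\mathrm{Tr}(X\tau)=0$ together with $X,\tau \geq 0$ forces $X\tau = 0$: indeed $\mathrm{Tr}(X\tau) = \|X^{1/2}\tau^{1/2}\|_2^2$, whence $X^{1/2}\tau^{1/2}=0$ and hence $X\tau = X^{1/2}(X^{1/2}\tau^{1/2})\tau^{1/2} = 0$. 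This says precisely that the range of $\tau$ lies in $\ker X = \mathrm{ran}(P_A)\otimes\H_B = \supp(\tau_A)\otimes\H_B$, and the identity $\supp(\tau_A)\otimes\H_B = \supp(\tau_A\otimes I_B)$ is immediate from a spectral decomposition of $\tau_A$. This closes the bipartite step, and with it the theorem.

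I do not anticipate a genuine obstacle here: the essential content is the observation that the support of a state is always contained in the support of (any one of its marginals) $\otimes\,I$ — which is exactly what underlies Eq.~\eqref{eq:supp0} for pure states — and the only point worth emphasizing is that the argument above never uses purity of $\rho$ (or of $\sigma$), only positivity, so it extends verbatim from the pure case to an arbitrary mixed $\rho$. I would also note, for completeness, that non-triviality of $\N$ plays no role in this particular statement, being inherited from the standing conventions of the paper, and that the final intersection is meaningful because each $\supp(\rho_{\N_k}\otimes I_{\overline{\N}_k})$ is a bona fide subspace of $\H$.
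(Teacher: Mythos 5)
Your proposal is correct and follows essentially the same route as the paper's proof: both reduce the claim to the per-neighborhood fact that a state's support lies inside $\supp(\tau_{\N_k}\otimes I_{\overline{\N}_k})$, established via the observation that a vanishing trace of a product of two positive semidefinite operators forces the product itself to vanish (the paper phrases this through the zero-eigenvalue spectral projectors of the RDM, you through the single complementary projector, which is the same argument), and then both conclude by using $\sigma_{\N_k}=\rho_{\N_k}$ and intersecting over the neighborhoods. No gaps.
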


\begin{proof}
Since all states in $\mathcal{M}_{\N}(\rho)$ share the same set of RDMs on $\N$, their DQLS subspaces coincide. 
That is, $\H_{\N}(\rho) = \H_{\N}(\sigma)$, for all $\sigma \in \mathcal{M}_{\N}(\rho)$. 
	
To complete the proof it only remains to show that $\supp(\rho) \subseteq \H_{\N}(\rho)$, for arbitrary given $\rho \in \mathcal{D}(\H)$ and $\N$. Fix a neighborhood $\N_k\in\N$. Let $\rho_{\N_k} = \sum_q p_q\Pi_q$ be the spectral decomposition of the RDM $\rho_{\N_k}$, where $\{\Pi_q \}$ is a resolution of the identity. By definition of partial trace, $p_q = \tr{}{\rho\Pi_q\otimes I_{\overline{\N}_k}}$. If $p_{\bar{q}} = 0$ for some ${\bar{q}}$, this implies $\tr{}{\rho\Pi_{\bar{q}}\otimes I_{\overline{\N}_k}} = 0$. Consequently, $\rho(\Pi_{\bar{q}}\otimes I_{\overline{\N}_k}) = 0$ due to the positivity of the two operators. Thus, $\supp(\rho) \perp \supp(\Pi_{\bar{q}}\otimes I_{\overline{\N}_k})$, for all $p_{\bar{q}} = 0$, and $\supp(\rho) \subseteq \cup_{p_q\neq 0}\supp(\Pi_q\otimes I_{\overline{\N}_k}) = \supp(\rho_{\N_k}\otimes I_{\overline{\N}_k})$. This implies 
$\supp(\rho) \subseteq \H_{\N}(\rho)$, as desired.
\end{proof}

Based on Theorem~\ref{th:DQLSsubspace}, an algorithmic procedure for reconstructing global states from local data may be summarized as follows:
\begin{enumerate}
\item[{\bf{(1)}}] Collect information about the \emph{support} of the RDMs on selected neighborhoods $\{\N_k \}$.
\item[{\bf{(2)}}] Construct the associated DQLS subspace, using Eq. \eqref{eq:subs}.
\item[{\bf{(3)}}] Search for states consistent with the given RDMs having {\em support in the DQLS subspace}. 
\end{enumerate}

\noindent 
An immediate corollary also follows: 
\begin{coro}
\label{coro:dim}
Consider $\rho \in \mathcal{D}(\H)$ and a non-trivial neighborhood structure $\N$. Then the rank of any 
state in the joining set of $\rho$ is no larger than the dimensionality of the corresponding DQLS subspace:
$$ {\rm rank}(\sigma) \leq {\rm dim} [\H_{\N} (\rho)] , \quad \forall \sigma \in {\mathcal M}_\N(\rho).$$
\end{coro}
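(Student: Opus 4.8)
The plan is to obtain the claim directly from Theorem~\ref{th:DQLSsubspace}, essentially with no additional work. Fix any $\sigma \in \mathcal{M}_\N(\rho)$. That theorem guarantees $\supp(\sigma) \subseteq \H_\N(\rho)$. Since $\sigma$ is a density operator, it is positive semi-definite, so its rank equals the dimension of its support, ${\rm rank}(\sigma) = {\rm dim}[\supp(\sigma)]$. Because the dimension of a linear subspace is monotone under inclusion, the containment $\supp(\sigma) \subseteq \H_\N(\rho)$ forces ${\rm dim}[\supp(\sigma)] \leq {\rm dim}[\H_\N(\rho)]$; chaining the two relations yields ${\rm rank}(\sigma) \leq {\rm dim}[\H_\N(\rho)]$ for every $\sigma \in \mathcal{M}_\N(\rho)$, which is exactly the assertion.

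There is effectively no obstacle here. The only elementary point worth spelling out is the identification of the rank of a positive semi-definite operator with the dimension of its support; if one wants this fully explicit, take a spectral decomposition $\sigma = \sum_j q_j \ketbra{\phi_j}$ with all $q_j > 0$, observe that each $\ket{\phi_j}$ lies in $\supp(\sigma) \subseteq \H_\N(\rho)$, and note that the ${\rm rank}(\sigma)$ eigenvectors $\{\ket{\phi_j}\}$ are linearly independent and contained in $\H_\N(\rho)$, so their number cannot exceed ${\rm dim}[\H_\N(\rho)]$. No further property of the neighborhood structure is needed beyond non-triviality, which was already invoked in Theorem~\ref{th:DQLSsubspace}; this is why the result is an immediate corollary rather than requiring an independent argument.
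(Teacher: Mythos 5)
Your argument is correct and matches the paper's intent exactly: the paper states this corollary without a separate proof, treating it as an immediate consequence of Theorem~\ref{th:DQLSsubspace} via precisely the support-containment and rank-equals-support-dimension reasoning you spell out. Nothing is missing.
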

 
\noindent 
As we noted in Eq. \eqref{eq:DQLSstate}, pure states that are UGS of a QL Hamiltonians are associated to a one-dimensional DQLS subspace. By the above corollary, we then recover the known fact that UGS are always UDA relative to the corresponding neighborhood structure \cite{Chen2012}, as we also further discuss in Sec.~\ref{sub:equiv0}. 

Besides reducing the search space for characterizing the joining set, the DQLS subspace is also instrumental in obtaining a sufficient criterion for certifying the UDA property. This is formalized in the 
following: 
\begin{thm}
\label{th:UDAwitness}
Consider a pure state $\pure \in \H$ with DQLS subspace $\H_{\N}(\pure)$ relative to a non-trivial neighborhood structure $\N=\{\N_k\}$. 
Assume that there exists a QL Hermitian operator, ${\mathbb W}= \sum_k {\mathbb W}_{\N_k} \otimes I_{\overline{\N}_k}$, such that 
\begin{equation}
\label{eq:UDAwitness}
\langle \psi | {\mathbb W}  \pure < \langle \phi| {\mathbb W}   |\phi\rangle,
\end{equation}
for any normalized state $|\phi\rangle\in\H_{\N}(\pure)$, with $\ketbra{\phi}\neq\ketbra{\psi}$. Then $\pure$ is UDA relative to $\N$.  
\end{thm}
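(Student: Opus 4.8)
The plan is to argue by contradiction. Suppose $\pure$ is not UDA relative to $\N$; then the joining set $\mathcal{M}_{\N}(\pure)$ contains some state $\sigma \neq \ketbra{\psi}$. The goal is to show that the existence of such a $\sigma$ is incompatible with the strict inequality \eqref{eq:UDAwitness}. The key structural fact to exploit is that $\mathbb{W}$ is QL relative to $\N$: writing $\mathbb{W} = \sum_k \mathbb{W}_{\N_k} \otimes I_{\overline{\N}_k}$, the expectation $\tr{}{\mathbb{W}\tau}$ depends on $\tau$ only through its RDMs on the neighborhoods, so $\tr{}{\mathbb{W}\sigma} = \tr{}{\mathbb{W}\ketbra{\psi}} = \inprod{\psi}{\mathbb{W}\pure}$ for every $\sigma \in \mathcal{M}_{\N}(\pure)$.

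Next I would invoke Theorem~\ref{th:DQLSsubspace}: every $\sigma \in \mathcal{M}_{\N}(\pure)$ has $\supp(\sigma) \subseteq \H_{\N}(\pure)$. Write the spectral decomposition $\sigma = \sum_j \lambda_j \ketbra{\phi_j}$ with $\lambda_j > 0$, $\sum_j \lambda_j = 1$, and each $\ket{\phi_j} \in \H_{\N}(\pure)$ normalized. Then
\begin{equation}
\inprod{\psi}{\mathbb{W}\pure} = \tr{}{\mathbb{W}\sigma} = \sum_j \lambda_j \bra{\phi_j}\mathbb{W}\ket{\phi_j}.
\label{eq:witness_convex}
\end{equation}
Now apply the hypothesis \eqref{eq:UDAwitness} to each term: for every $j$ with $\ketbra{\phi_j} \neq \ketbra{\psi}$ we have $\bra{\phi_j}\mathbb{W}\ket{\phi_j} > \inprod{\psi}{\mathbb{W}\pure}$, while for any $j$ with $\ketbra{\phi_j} = \ketbra{\psi}$ we have equality. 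Taking the convex combination, $\sum_j \lambda_j \bra{\phi_j}\mathbb{W}\ket{\phi_j} \geq \inprod{\psi}{\mathbb{W}\pure}$, with equality only if $\ketbra{\phi_j} = \ketbra{\psi}$ for all $j$ — that is, only if $\sigma = \ketbra{\psi}$. Since we assumed $\sigma \neq \ketbra{\psi}$, at least one eigenvector differs from $\ket{\psi}$ (up to phase), so the inequality in \eqref{eq:witness_convex} is strict, yielding $\inprod{\psi}{\mathbb{W}\pure} > \inprod{\psi}{\mathbb{W}\pure}$, a contradiction. Hence $\mathcal{M}_{\N}(\pure) = \{\ketbra{\psi}\}$ and $\pure$ is UDA.

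The only subtlety — and the step I would write most carefully — is the claim that equality in the convex combination forces every eigenvector to equal $\ket{\psi}$ up to phase; this is where one needs $\ketbra{\psi}$ to actually lie in $\H_{\N}(\pure)$ (which it does, since $\pure \in \H_{\N}(\pure)$ by construction) so that the hypothesis \eqref{eq:UDAwitness} is applicable to the dichotomy "$\ketbra{\phi_j} = \ketbra{\psi}$ or strictly larger." Everything else is a routine application of linearity of the trace, the QL structure of $\mathbb{W}$, and Theorem~\ref{th:DQLSsubspace}; no genuine obstacle is expected.
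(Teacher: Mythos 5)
Your proposal is correct and follows essentially the same route as the paper's own proof: a contradiction argument that combines the QL structure of $\mathbb{W}$ (so that $\tr{}{\mathbb{W}\sigma}=\inprod{\psi}{\mathbb{W}\pure}$ for any $\sigma$ in the joining set), Theorem~\ref{th:DQLSsubspace} to place the spectral decomposition of $\sigma$ inside $\H_{\N}(\pure)$, and the strict witness inequality applied termwise to the convex combination. If anything, your handling of the equality case in the convex combination is spelled out more carefully than in the paper's brief ``each expectation must be minimum'' phrasing, but the underlying argument is identical.
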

\begin{proof}
We prove this by contradiction. Assume that $\ket{\psi}$ is not UDA relative to $\N$. Then, there exists some $\sigma\neq\ketbra{\psi}$ with the same marginals as $\ket{\psi}$ in $\N$. By Theorem \ref{th:DQLSsubspace}, $\textup{supp}(\sigma)\subseteq\H_{\N}(\pure)$. That is, $\sigma = \sum_i p_i|\phi_i\rangle\langle \phi_i|$, where $|\phi_i\rangle\in\H_{\N}(\pure)$ and $p_i\geq0$ for all $i$, with $\sum_i p_i =1$.

Also assume that an operator ${\mathbb W}$ obeying Eq.~\eqref{eq:UDAwitness} exists. Owing to the QL nature of ${\mathbb W}$ relative to $\N$,  $\langle \psi|{\mathbb W}\pure = \tr{}{{\mathbb W}\sigma}$. Then in order for the expectation $\tr{}{{\mathbb W}\sigma} = \sum_i p_i \langle \phi_i|{\mathbb W}|\phi_i\rangle$ to be minimum, each pure-state expectation $\langle \phi_i |{\mathbb W} |\phi_i\rangle$ must be minimum.  But, following Eq.~\eqref{eq:UDAwitness},  this is only possible if  $p_1 = 1$ and $|\phi_1\rangle = \pure$, whereby the conclusion follows. \end{proof}

We call a QL observable ${\mathbb W}$ with the above properties a ($\N$-)\emph{UDA witness} for $\pure$. We note that ${\mathbb W}$ only needs to be uniquely {\em extremized} by $\pure$ over the DQLS subspace $\H_{\N}(\pure)$: if $\langle \psi| {\mathbb W} \pure$ is a unique maximum on this subspace, it can still be brought under the purview of Eq.~\eqref{eq:UDAwitness} by simply choosing $-{\mathbb W}$ as the UDA witness. It is important to appreciate that the restriction to a {\em pure} state $\pure$ is crucial. A UDA witness cannot be used, as in Theorem \ref{th:UDAwitness}, to diagnose a proper \emph{mixed} state as UDA. To see this, consider a UDA proper mixed state with spectral decomposition $\rho = \sum_i p_i |\psi_i\rangle \langle \psi_i|$. From Theorem~\ref{th:DQLSsubspace}, each $|\psi_i\rangle$ is in $\H_\N(\pure)$. Since $\rho$ is a proper convex combination of  $\{\ketbra{\psi_i}\}$, there must be at least one pure state, not equal to $\rho$, for which $\tr{}{\mathbb{W}\ketbra{\psi_i}}\leq\tr{}{\mathbb{W}\rho}$.
Thus, $\rho$ is not the unique minimizer of $\tr{}{\mathbb{W}\rho}$ among states with support in $\H_\N(\pure)$ \cite{Terhal}.

Clearly, in order for the DQLS subspace to be useful in analyzing UDA properties it is both important that it is efficiently computable and of sufficiently low dimensionality compared to the full $\H$. We refer to Theorem 4.1 in~\cite{SaliniJPA} for a discussion of conditions under which $\H_{\N}(\pure)= \H$ for a generic multipartite pure state $\pure$. While the complexity of obtaining $\H_{\N}(\rho)$ has not been investigated in general so far, it is often possible to leverage structural properties of the state of interest to analytically characterize $\H_{\N}(\rho)$ for arbitrary $\N$, as we explicitly illustrate next.

\subsection{Application to W state and SLOCC equivalents}
\label{sec:Wstate}

We focus on the paradigmatic W state and its SLOCC class on $N$ qubits. The W state on $N$ qubits is a symmetric combination of one qubit excitations, namely, 
\begin{eqnarray*} 
|{\rm W}\rangle_N &\! \equiv \!& \frac{1}{\sqrt{N}} \Big(|10\ldots0\rangle+|01\ldots 0\rangle +
\ldots |00\ldots1\rangle\Big) \\ 
&\!=\! &\frac{1}{\sqrt{N}}\sum_{k=1}^N|k\rangle_N. 
\end{eqnarray*}
It has been shown that the SLOCC equivalents of $|{\rm W}\rangle_N$ may be represented, up to local unitary 
transformations, by using a {\em generalized W state} \cite{Rana} of the form
\begin{equation}
\label{eq:GW}
|{\rm GW}\rangle_N = c_0|0\rangle_N+\!\sum_{k=1}^Nc_k|k\rangle_N \equiv c_0|0\rangle_N+
\sqrt{1-c_0^2} \, |\overline{\rm W}\rangle_N,
\end{equation}
for real coefficients that obey $c_0 \geq 0$, $c_k >0$, for all $k>1$, $\sum_{k=0}^N c_k^2 = 1$ and where, by construction, $|\overline{\rm W}\rangle_N$ is orthogonal to $|0\rangle_N$. If two multipartite states are related by a local unitary transformation, that is, if 
\[\rho = (\otimes_{k=1}^N U_k)\,\sigma \,(\otimes_{k=1}^N U_k^\dagger),\]
it is immediate to verify that their joining sets relative to any given neighborhood structure $\N$ are also related by the same transformation. Hence we can construct the joining set of the SLOCC class of the W state by investigating the one 
for $|{\rm GW}\rangle_N$.

Let us first characterize the DQLS subspace of $|{\rm GW}\rangle_N$ relative to arbitrary $\N$. Thanks to the fact that, 
as recalled in Sec. \ref{sec:prelim}, the dimension of the DQLS subspace is invariant under SLOCC transformations, we have that 
\begin{equation}
\label{eq:DQLS_SLOCC}
\dimension{\H_{\N}(|{\rm GW}\rangle_N)} = \dimension{\H_{\N}(|{\rm W}\rangle_N)}.
\end{equation}
We can use this result to characterize $\H_{\N}(|{\rm GW}\rangle_N)$: 

\begin{prop}
\label{pr:GW_DQLS}
The DQLS subspace of the generalized W state $|{\rm GW}\rangle_N$ relative to any non-trivial neighborhood structure 
$\N$ is given by
\begin{equation}
\label{eq:GW_DQLS}
\H_{\N}(|{\rm GW}\rangle_N) = \Span\{|0\rangle_N,|\overline{{\rm W}}\rangle_N \}, 
\end{equation}
where the corresponding states are defined in Eq.~\eqref{eq:GW}.
\end{prop}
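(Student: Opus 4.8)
### Proof proposal for Proposition~\ref{pr:GW_DQLS}

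\textbf{Strategy.} The plan is to prove the two inclusions defining $\H_{\N}(|{\rm GW}\rangle_N)$ separately, while using the SLOCC dimension invariance \eqref{eq:DQLS_SLOCC} only to certify that the inclusion we can establish directly is actually an equality. First I would observe that $|{\rm GW}\rangle_N$ itself, as well as $|0\rangle_N$, lies in $\Span\{|0\rangle_N,|\overline{\rm W}\rangle_N\}$, so it suffices to show on the one hand that $\Span\{|0\rangle_N,|\overline{\rm W}\rangle_N\}\subseteq\H_{\N}(|{\rm GW}\rangle_N)$, and on the other hand that $\dim\H_{\N}(|{\rm GW}\rangle_N)\le 2$; the latter follows from \eqref{eq:DQLS_SLOCC} once I show $\dim\H_{\N}(|{\rm W}\rangle_N)\le 2$ (in fact $=2$), which reduces the whole problem to a direct computation on the unweighted W state.

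\textbf{Key steps.} (1) \emph{Reduce to $|{\rm W}\rangle_N$.} By Eq.~\eqref{eq:DQLS_SLOCC} it is enough to determine $\H_{\N}(|{\rm W}\rangle_N)$ and show it is two-dimensional and contained in $\Span\{|0\rangle_N,|\overline{\rm W}\rangle_N\}$ — but note $|\overline{\rm W}\rangle_N$ for the unweighted case is just $|{\rm W}\rangle_N$, and $\Span\{|0\rangle_N,|{\rm W}\rangle_N\}$ is SLOCC-independent as a \emph{dimension} statement, so I will compute $\H_{\N}(|{\rm W}\rangle_N)$ and then argue the span form transfers. (2) \emph{Compute the neighborhood RDMs of $|{\rm W}\rangle_N$.} For a neighborhood $\N_k$ with $|\N_k|=m<N$, tracing out $\overline{\N}_k$ from $|{\rm W}\rangle_N\langle{\rm W}|$ gives $\rho_{\N_k}=\tfrac{N-m}{N}|0\rangle_m\langle 0|+\tfrac{m}{N}|{\rm W}\rangle_m\langle{\rm W}|$, whose support is exactly $\Span\{|0\rangle_m,|{\rm W}\rangle_m\}$ inside $\H_{\N_k}$ (a genuinely two-dimensional subspace since $m\ge1$). (3) \emph{Inflate and intersect.} Then $\supp(\rho_{\N_k}\otimes I_{\overline{\N}_k})=\Span\{|0\rangle_m,|{\rm W}\rangle_m\}\otimes\H_{\overline{\N}_k}$. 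Taking the intersection over all $\N_k\in\N$ and using non-triviality (every qubit lies in some neighborhood, neighborhoods overlap), one checks that a vector in the intersection, when expanded in the computational basis, can have nonzero weight only on $|0\rangle_N$ and on single-excitation strings $|j\rangle_N$, and moreover the single-excitation amplitudes must all be equal — forcing the vector into $\Span\{|0\rangle_N,|{\rm W}\rangle_N\}$. (4) \emph{Conclude for $|{\rm GW}\rangle_N$.} Since $|0\rangle_N$ and $|\overline{\rm W}\rangle_N$ are obtained from $|0\rangle_N$ and $|{\rm W}\rangle_N$ by the (invertible, local) SLOCC map taking $|{\rm W}\rangle_N$ to $|{\rm GW}\rangle_N$ composed with the relevant local unitaries, and the DQLS subspace transforms covariantly under local invertible maps while its dimension is preserved, the explicit two-dimensional span \eqref{eq:GW_DQLS} follows; alternatively, redo steps (2)-(3) directly for $|{\rm GW}\rangle_N$, whose RDMs are $\tfrac{?}{}$-weighted combinations of $|0\rangle_m\langle0|$, $|0\rangle_m\langle\overline{\rm W}|+\text{h.c.}$, and $|\overline{\rm W}\rangle_m\langle\overline{\rm W}|$, again spanning a two-dimensional support.

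\textbf{Main obstacle.} The delicate point is step (3): showing that the intersection of the inflated supports is \emph{no larger} than $\Span\{|0\rangle_N,|{\rm W}\rangle_N\}$ for an \emph{arbitrary} non-trivial neighborhood structure, rather than merely for, say, a covering by two-body neighborhoods. One must argue carefully that if $|\chi\rangle=\sum_S \alpha_S|S\rangle_N$ lies in every $\Span\{|0\rangle_m,|{\rm W}\rangle_m\}\otimes\H_{\overline{\N}_k}$, then for each $k$ the component of $|\chi\rangle$ living on the $\N_k$ qubits must, conditioned on any fixed basis state of $\overline{\N}_k$, lie in the local two-dimensional W-space; propagating this constraint across overlapping neighborhoods kills all multi-excitation terms and equalizes the weights of the single-excitation terms. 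Handling the bookkeeping of which excitation patterns survive, uniformly in $\N$, is where the real work lies — the SLOCC-invariance shortcut of step (1) is precisely what lets us avoid doing this for the weighted coefficients $c_k$ and do it only once, for the symmetric W state.
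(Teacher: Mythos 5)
Your overall architecture matches the paper's: bound $\dimension{\H_{\N}(|{\rm GW}\rangle_N)}$ by $2$ via the SLOCC invariance \eqref{eq:DQLS_SLOCC}, then exhibit two independent vectors in the subspace. But as written there are two genuine gaps. First, your step (3) --- proving that $\H_{\N}(|{\rm W}\rangle_N)$ is no larger than $\Span\{|0\rangle_N,|{\rm W}\rangle_N\}$ for an \emph{arbitrary} non-trivial $\N$ --- is exactly the hard combinatorial part, and you leave it as an admitted sketch; the paper does not attempt it at all, it simply cites the previously established two-dimensionality of the W-state DQLS subspace \cite{Ticozzi2012,TicozziQIC2014}. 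Either carry out the propagation argument in full (it needs the connectivity of the neighborhood structure, which is how non-triviality is used) or cite the known result as the paper does. Second, your primary route in step (4) does not close the proof: the paper only supplies invariance of the DQLS \emph{dimension} under SLOCC, not covariance of the subspace, and even if you prove covariance $\H_{\N}\big((\otimes_a A_a)|\psi\rangle\big)=(\otimes_a A_a)\,\H_{\N}(|\psi\rangle)$ for invertible local $A_a$ (true, but it must be argued), it yields $\Span\{(\otimes_a A_a)|0\rangle_N,\,|{\rm GW}\rangle_N\}$, where $(\otimes_a A_a)|0\rangle_N$ is a generic product state. To land on \eqref{eq:GW_DQLS} you must additionally exhibit an SLOCC representative fixing the vacuum, e.g. $A_a|0\rangle=|0\rangle$, $A_a|1\rangle=z_a|0\rangle+y_a|1\rangle$; without that choice the claimed span does not follow.

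The clean way to finish --- and what the paper actually does --- is your ``alternative'' clause, carried out explicitly: write $|{\rm GW}\rangle_N$ across the cut $\N_j\,\vert\,\overline{\N}_j$ and compute $\rho_{\N_j}=(1-\langle\nu|\nu\rangle)|0\rangle_L\langle 0|+|\nu\rangle_L\langle\nu|+c_0\big(|0\rangle_L\langle\nu|+\mathrm{H.c.}\big)$ with $|\nu\rangle_L=\sum_{k\in\N_j}c_k|f(k)\rangle_L$. Its $2\times 2$ matrix in the basis $\{|0\rangle_L,|\nu\rangle_L/\sqrt{\langle\nu|\nu\rangle}\}$ has determinant $\langle\nu|\nu\rangle\sum_{k\notin\N_j,\,k\geq 1}c_k^2>0$ (all $c_k>0$ and $\N_j$ is a proper subset), so $|0\rangle_L\in\supp(\rho_{\N_j})$ for every neighborhood and hence $|0\rangle_N\in\H_{\N}(|{\rm GW}\rangle_N)$; combined with $|{\rm GW}\rangle_N\in\H_{\N}(|{\rm GW}\rangle_N)$ from Theorem~\ref{th:DQLSsubspace} and the dimension bound, this gives \eqref{eq:GW_DQLS}. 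Note also that the local support is spanned by $|0\rangle_L$ and the \emph{weighted} vector $|\nu\rangle_L$, not the symmetric $|{\rm W}\rangle_L$, so the ``$?$-weighted'' RDM form you gesture at must be made explicit before you can assert a two-dimensional support containing $|0\rangle_L$.
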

\begin{proof}
From Theorem~\ref{th:DQLSsubspace}, we know that $|{\rm GW}\rangle_N \in\H_{\N}(|{\rm GW}\rangle_N) $. 
Since we also know from previous analysis \cite{Ticozzi2012,TicozziQIC2014} that the DQLS subspace of the
W state is two-dimensional for any non-trivial $\N$, we only need to find another pure state that is linearly 
independent from $|{\rm GW}\rangle_N$ to fully characterize this subspace. We now show that $|0\rangle_N$ 
also belongs to $\H_{\N}(|{\rm GW}\rangle_N) $.

Fix a neighborhood $\N_j = \{k_1,\dots,k_L\} \in \N$ such that it contains $L<N$ qubits. With respect to the $\N_j\vert \overline{\N}_j$ bi-partition, we can rewrite the one-excitation terms in Eq.~\eqref{eq:GW} in the following form:
\begin{align*}
|k\rangle_N = |f(k)\rangle_L|0\rangle_{N-L},\quad \forall\, k\in\N_j,\\
|k\rangle_N = |0\rangle_L|g(k)\rangle_{N-L},\quad \forall\, k\notin\N_j.
\end{align*}
Here, $ |f(k)\rangle_L,|0\rangle_{L} \in \H_{\N_j}$ and $ |g(k)\rangle_{N-L},|0\rangle_{N-L}\in \H_{\overline{\N}_j}$, and 
$f(k) \in \{1, \ldots, L\} $, $g(k) \in \{1, \ldots, N-L\}$ denote the relative position of the $k$-th qubit that is excited, depending on the neighborhood to which it belongs. In terms of this bi-partition, we can express $|{\rm GW}\rangle_N$ as:
\begin{eqnarray*}
|{\rm GW}\rangle_N  & =&  c_0|0\rangle_N+\sum_{k\in\N_j}c_k|f(k)\rangle_L|0\rangle_{N-L}\\
&+& \sum_{k\notin\N_j}c_k |0\rangle_L|g(k)\rangle_{N-L}. 
\end{eqnarray*}
Now define $|\nu\rangle_L \equiv \sum_{k\in\N_j}c_k|f(k)\rangle_L$.
The RDM of the state in $\N_j$ is given by
\[
\rho_{\N_j} = \big( 1-\langle\nu|\nu\rangle\big)|0\rangle_L\langle 0|+ |\nu\rangle_L\langle\nu|+c_0\big(|0\rangle_L\langle\nu|+|\nu\rangle_L\langle 0|\big).
\] 
It is thus easy to verify that $|0\rangle_L\in\supp(\rho_{\N_j})$ for all $\N_j\in\N$ and, therefore,
\[|0\rangle_N\in\bigcap_{\N_j}\supp(\rho_{\N_j}\otimes I_{\overline{\N}_j}) = \H_{\N}(|{\rm GW}\rangle_N). \]
We conclude that both $|0\rangle_N,|{\rm GW} \rangle_N \in \H_{\N}(|{\rm GW}\rangle_N)$. Choosing an orthonormal basis 
yields Eq.~\eqref{eq:GW_DQLS}.
\end{proof}

Note that determining $\H_{\N}(|{\rm W}\rangle_N)$ is a special case of the above proposition, corresponding to $c_0=0$ and $c_k=1/\sqrt{N}$ for all $k$ in Eq.~\eqref{eq:GW}. This recovers the observation that $\H_{\N}(|{\rm W}\rangle_N) = \Span\{|0\rangle_N,|{\rm W}\rangle_N\}$ originally made in~\cite{Ticozzi2012}. It is remarkable that the search space for determining the joining set of any generalized W state is reduced from the $2^N$-dimensional Hilbert space to the {\em two-dimensional} DQLS subspace, for arbitrary $N$ and $\N$. 

Furthermore, we now exploit the structure of the DQLS subspace of generalized W states to prove that these pure states and, consequently, the entire SLOCC equivalence class of the W state, are indeed UDA relative to arbitrary $\N$. We do so by constructing an explicit UDA witness for the representative state $|{\rm GW}\rangle_N$.  An alternative direct proof that uses the structure of the DQLS subspace itself is also included in Appendix \ref{proofs}.

\begin{coro}
\label{th:GW_UDA}
SLOCC equivalent states of the $N$-qubit W state are UDA relative to any non-trivial neighborhood structure $\N$, with the UDA witness given by
\begin{equation}
\label{eq:witness_GW}
{\mathbb W} \!= \!\left[ \frac{1-2c_0^2}{d_1^2} \Big((d_1^2-1)\mathbbm{1}+\sigma^z_1\Big)+\frac{2c_0}{d_1} \sqrt{1-c_0^2}\,
\sigma^x_1\right]\otimes\mathbbm{1}_{2,N},
\end{equation} 
where $d_1 = c_1 (1-c_0^2)^{-1/2}$ and $ \mathbbm{1}_{2,N}$ is the identity operator acting on qubits $2,\dots,N$.
\end{coro}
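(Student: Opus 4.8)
The plan is to verify that the operator $\mathbb{W}$ of Eq.~\eqref{eq:witness_GW} is a $\N$-UDA witness for $|{\rm GW}\rangle_N$ in the sense of Theorem~\ref{th:UDAwitness}, and then transfer the conclusion to the entire SLOCC class by local-unitary covariance. First note that $\mathbb{W}$ is supported on qubit~$1$ alone, hence can be written as $\mathbb{W}_{\N_{k_0}}\otimes I_{\overline{\N}_{k_0}}$ for any neighborhood $\N_{k_0}\ni1$ — one exists because $\N$ is non-trivial — so $\mathbb{W}$ is QL relative to every non-trivial $\N$. By Theorem~\ref{th:UDAwitness} it then remains only to show that $\langle{\rm GW}|\mathbb{W}|{\rm GW}\rangle<\langle\phi|\mathbb{W}|\phi\rangle$ for every normalized $|\phi\rangle\in\H_{\N}(|{\rm GW}\rangle_N)$ with $\ketbra{\phi}\neq|{\rm GW}\rangle\langle{\rm GW}|$. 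By Proposition~\ref{pr:GW_DQLS}, $\H_{\N}(|{\rm GW}\rangle_N)=\Span\{|0\rangle_N,|\overline{\rm W}\rangle_N\}$ is two-dimensional, so the whole problem collapses to a $2\times2$ spectral question: letting $M$ be the compression of $\mathbb{W}$ to this subspace, written in the orthonormal basis $\{|0\rangle_N,|\overline{\rm W}\rangle_N\}$, one must show that $M$ has a non-degenerate least eigenvalue whose eigenvector is $|{\rm GW}\rangle_N=c_0|0\rangle_N+\sqrt{1-c_0^2}\,|\overline{\rm W}\rangle_N$.

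To compute $M$ I would split qubit~$1$ from the remaining $N-1$ qubits: $|0\rangle_N=|0\rangle_1|0\rangle_{N-1}$, while from Eq.~\eqref{eq:GW} one has $|\overline{\rm W}\rangle_N=(1-c_0^2)^{-1/2}\bigl(c_1|1\rangle_1|0\rangle_{N-1}+|0\rangle_1|\chi\rangle_{N-1}\bigr)$, where $|\chi\rangle_{N-1}=\sum_{k=2}^N c_k|k-1\rangle_{N-1}$ is a one-excitation vector orthogonal to $|0\rangle_{N-1}$ with $\langle\chi|\chi\rangle=1-c_0^2-c_1^2$. Since $\mathbb{W}$ acts as the single-qubit matrix $W_1$ on qubit~$1$ and trivially elsewhere, every entry of $M$ reduces to an entry of $W_1$: the $(1,1)$ entry is $(W_1)_{00}$, the $(1,2)$ entry is $d_1(W_1)_{01}$, and the $(2,2)$ entry is $d_1^2(W_1)_{11}+(1-d_1^2)(W_1)_{00}$, the cross terms involving $|\chi\rangle_{N-1}$ dropping out by orthogonality. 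Substituting the entries of $W_1$ read off from Eq.~\eqref{eq:witness_GW}, all the $d_1$-dependence cancels and $M$ becomes a traceless Hermitian matrix depending on $c_0$ only, with $\det M=-1$; hence its spectrum is $\{-1,+1\}$ — in particular non-degenerate — and, with the signs as in Eq.~\eqref{eq:witness_GW}, the eigenvector of the smaller eigenvalue is (a multiple of) $(c_0,\sqrt{1-c_0^2})$, i.e.\ precisely $|{\rm GW}\rangle_N$. Thus $\langle{\rm GW}|\mathbb{W}|{\rm GW}\rangle$ is the strict minimum of $\langle\phi|\mathbb{W}|\phi\rangle$ over the DQLS subspace and Theorem~\ref{th:UDAwitness} gives that $|{\rm GW}\rangle_N$ is UDA relative to $\N$. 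To conclude, recall (as noted just before Proposition~\ref{pr:GW_DQLS}) that every SLOCC equivalent of $|{\rm W}\rangle_N$ equals $U|{\rm GW}\rangle_N$ for some local unitary $U=\bigotimes_k U_k$ and some coefficients as in Eq.~\eqref{eq:GW}, and that local-unitary conjugation maps joining sets to joining sets; hence $U|{\rm GW}\rangle_N$ is UDA with witness $U\mathbb{W}U^\dagger$, which is again QL relative to $\N$.

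The step I expect to require the most care is the evaluation of $M$: getting the $(2,2)$ entry right demands tracking the one-excitation vector $|\chi\rangle_{N-1}$ and the several factors of $d_1$, and the cancellation of $d_1$ — the whole reason behind the particular form of Eq.~\eqref{eq:witness_GW} — must be checked explicitly rather than assumed, since any sign or factor slip would corrupt the identification of the minimal eigenvector. One should also keep in mind the standing requirement $c_1\neq0$, which is needed for $d_1$, and hence $\mathbb{W}$, to be well defined; this is no loss of generality, because for an arbitrary representative of the W-SLOCC class one may always relabel the qubits so that the witness is placed on a qubit carrying a nonzero coefficient. A minor additional point is the non-degeneracy of the least eigenvalue of $M$, which is immediate once one has $\mathrm{tr}\,M=0$ and $\det M=-1$.
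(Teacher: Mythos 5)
Your overall strategy is the same as the paper's, just run in the opposite direction: the paper builds $\mathbb{W}$ as the pullback, under an isometry $V$ onto the two-dimensional DQLS subspace, of a single-qubit Bloch operator $\vec n\cdot\vec\sigma$ extremized by the image of $|{\rm GW}\rangle_N$, while you compress the given $\mathbb{W}$ to $\Span\{|0\rangle_N,|\overline{\rm W}\rangle_N\}$ and inspect the resulting $2\times 2$ matrix; these are mirror images of one and the same computation, and your QL/strict-locality observation, the use of Proposition~\ref{pr:GW_DQLS}, the local-unitary transfer to the SLOCC class, and the $c_1\neq 0$ caveat all match the paper. The genuine problem is the asserted outcome of the step you yourself flagged as delicate. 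Carrying out the compression with the coefficients as printed in Eq.~\eqref{eq:witness_GW} gives, in the basis $\{|0\rangle_N,|\overline{\rm W}\rangle_N\}$,
\begin{equation*}
M=\begin{pmatrix} 1-2c_0^2 & 2c_0\sqrt{1-c_0^2}\\ 2c_0\sqrt{1-c_0^2} & -(1-2c_0^2)\end{pmatrix},
\end{equation*}
which is indeed traceless with $\det M=-1$ and $d_1$-free, but whose $+1$ and $-1$ eigenvectors are proportional to $(\sqrt{1-c_0^2},\,c_0)$ and $(-c_0,\,\sqrt{1-c_0^2})$, respectively. The vector $(c_0,\sqrt{1-c_0^2})$ representing $|{\rm GW}\rangle_N$ is not an eigenvector of $M$ at all unless $c_0=0$ or $c_0^2=1/2$, so your claim that the least eigenvalue has eigenvector $|{\rm GW}\rangle_N$ — the inequality required by Theorem~\ref{th:UDAwitness} — fails as stated.

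Two separate fixes are needed, and comparing with the paper's proof makes them visible. First, the extremum has the wrong sense: the paper establishes that $\mathbb{W}$ is uniquely \emph{maximized} by $|{\rm GW}\rangle_N$ over $\H_{\N}(|{\rm GW}\rangle_N)$ and then appeals to the remark after Theorem~\ref{th:UDAwitness} that unique extremization suffices (equivalently, $-\mathbb{W}$ is the witness in the strict-minimum sense); your write-up commits to a minimum and the algebra cannot deliver it. Second, the intended $Z$-coefficient is $2c_0^2-1$, not $1-2c_0^2$: the Bloch vector of $|{\rm gw}\rangle=c_0|0\rangle+\sqrt{1-c_0^2}\,|1\rangle$ has $z$-component $\cos\theta=2\cos^2(\theta/2)-1=2c_0^2-1$, so the paper's substitution $\cos\theta=1-2c_0^2$ in Eq.~\eqref{eq:n_sigma} is itself a sign slip that propagates into Eq.~\eqref{eq:witness_GW}. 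With the first bracket of Eq.~\eqref{eq:witness_GW} multiplied by $(2c_0^2-1)/d_1^2$ instead, the compression becomes $(2c_0^2-1)\sigma^z+2c_0\sqrt{1-c_0^2}\,\sigma^x$, whose unique $+1$ (maximal) eigenvector is exactly $(c_0,\sqrt{1-c_0^2})$, and the argument closes as in the paper. So the skeleton of your proof is sound and essentially the paper's, but the central verification must be redone with the extremum direction corrected and the sign of the $\sigma^z_1$ term fixed; as written, the key inequality you invoke is false for generic $c_0$.
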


\begin{proof}
Although the structure of the QL operator ${\mathbb W}$ may look complicated, we show that the particular choice of UDA witness is not arbitrary. Consider the isometric embedding $V: \H\to \H_2$, where $\H_2  \equiv \Span\{|0\rangle,|1\rangle\}$  a two-dimensional Hilbert space such that 
\[V|0\rangle_N = |0\rangle,\quad V|{\rm W}\rangle_N = |1\rangle, \] 
and $V^\dagger V$ is a projector onto $\H_{\N}(|{\rm GW}\rangle_N)$. We are required to find an Hermitian operator ${\mathbb W}\in\mathcal{B}(\H)$ such that it is QL relative to $\N$ and 
\begin{equation}
\label{eq:GW_max}
{}_N\langle {\rm GW}| {\mathbb W} |{\rm GW}\rangle_N > \langle \Phi| {\mathbb W} |\Phi\rangle,\quad 
\forall\, |\Phi\rangle\in\H_{\N}(|{\rm GW}\rangle_N). 
\end{equation}
Since $V$ is an isometry from $\H_{\N}(|{\rm GW}\rangle_N)$ to $\H_2$, it is distance-preserving. 
Thus, the above expression is equivalent to finding an operator $V {\mathbb W} V^\dagger \in \mathcal{B}(\H_2)$ such that 	
\[\langle {\rm gw}|V \,{\mathbb W} \,V^\dagger|{\rm gw}\rangle > 
\langle \phi|V \,{\mathbb W}\, V^\dagger|\phi\rangle,\, |\phi\rangle\in\H_2,\]
where $|{\rm gw}\rangle \equiv V|{\rm GW}\rangle_N = c_0|0\rangle+\sqrt{1-c_0^2} \,|1\rangle$ and $|\phi\rangle \equiv V|\Phi\rangle$. We now use the Bloch sphere representation of the former \cite{Nielsen-Chuang}, namely, 
$|{\rm gw}\rangle = \cos(\theta/2) |0\rangle+e^{i\phi}\sin(\theta/2)|1\rangle$, 
with $\theta = 2\arccos c_0$ and $\phi = 0$, to observe that 
\begin{equation}
\label{eq:n_sigma}
\vec{n}\cdot\vec{\sigma} \equiv \cos \theta Z + \sin\theta X = (1-2c_0^2)\,Z+2c_0\sqrt{1-c_0^2}\,X,
\end{equation}
is an operator in $\mathcal{B}(\H_2)$, which is uniquely maximized by $|{\rm gw}\rangle$. Our remaining task is to find ${\mathbb W}\in\mathcal{B}(\H)$, such that it is QL relative to $\N$ and $V{\mathbb W}V^\dagger = \vec{n}\cdot\vec{\sigma}$.
	
The QL operator $\sigma^z_1\otimes \mathbbm{1}_{2,N}$ transforms under the isometry $V$ in the following way:
\begin{eqnarray*}
V(\sigma^z_1\otimes \mathbbm{1}_{2N})V^\dagger & = & \begin{pmatrix}
{}_N\langle 0|\sigma_1^z|0\rangle_N
& {}_N\langle 0|\sigma_1^z|{\rm W}\rangle_N\\ 
{}_N\langle {\rm W}|\sigma_1^z|0\rangle_N & {}_N\langle {\rm W}|\sigma_1^z|{\rm W}\rangle_N\end{pmatrix}\\ 
& = &\begin{pmatrix}
	1 & 0\\ 0 & 1-2d_1^2
\end{pmatrix}, 
\end{eqnarray*}
where we have suppressed $\sigma_z^1\otimes \mathbbm{1}_{2,N}$ to $\sigma_z^1$ within the matrix representation. In terms of Pauli operators in $\H_2$, $V(\sigma_z^1\otimes \mathbbm{1}_{2N})V^\dagger = (1-d_1^2)\mathbbm{1}+d_1^2 Z$. We can do a similar analysis for $\sigma^x_1\otimes \mathbbm{1}_{2,N}, \sigma^y_1\otimes \mathbbm{1}_{2,N}$ as well. We now express Pauli operators in $\H_2$ in terms of $\{\sigma^x_1, \sigma^y_1,\sigma^z_1 \}$ and $\mathbbm{1}_N$, the identity operator in $\H$. That is,
\begin{eqnarray}
X & =& \frac{1}{d_1} V(\sigma^x_1\otimes \mathbbm{1}_{2,N})V^\dagger,  \label{eq:Pauli_isometryx}\\
Y & = & \frac{1}{d_1} V (\sigma^y_1\otimes \mathbbm{1}_{2,N})V^\dagger,  \label{eq:Pauli_isometryy}\\
Z &=& \frac{1}{d_1^2}V \left(\sigma^z_1\otimes \mathbbm{1}_{2,N} -(1-d_1^2) \mathbbm{1}_N\right)V^\dagger. \label{eq:Pauli_isometryz}
\end{eqnarray}	
Finally, we combine Eq.~\eqref{eq:witness_GW} and Eqs.~\eqref{eq:Pauli_isometryx}--\eqref{eq:Pauli_isometryz} to verify that $V\,{\mathbb W}\,V^\dagger$ is indeed equal to the operator $\vec{n}\cdot\vec{\sigma}$ in Eq.~\eqref{eq:n_sigma}.  Thus, ${\mathbb W}$ is uniquely maximized by the state $|{\rm GW}\rangle_N$ in $\H_{\N}(|{\rm GW}\rangle_N)$. Since ${\mathbb W}$ is {\em strictly local}, it can serve as the UDA witness for $|{\rm GW}\rangle_N$ relative to any non-trivial neighborhood structure $\N$, as claimed.
\end{proof}

The UDA nature of the $N$-qubit W state and its SLOCC equivalents relative to {\em specific} neighborhood structures has been previously investigated in the literature \cite{Rana,Wu20151}. Aside from being technically simpler and more transparent, our analysis has the advantage of being directly applicable to {\em any} neighborhood structure, as long as it is non-trivial. Thus, the existing results are obtained as special instances of a unified framework. Our approach can be extended to other quantum states as well, provided that the relevant DQLS subspaces can be characterized efficiently. 

\begin{remark}
It is worth spelling out in more detail the connection between Theorem~\ref{th:GW_UDA} and the results reported in \cite{Wu2014}. There, the authors show that the SLOCC class of the $N$-qubit W state is UDA relative to a {\em special} neighborhood structure $\N_{\rm tree}$, which is a collection of $(N-1)$ two-body neighborhoods. The neighborhoods in $\N_{\rm tree}$ are chosen in such a way that the corresponding qubits form a tree-graph on $N$ vertices, whose $(N-1)$ edges represent the relevant two-body neighborhoods. 
	
We now show how the result in \cite{Wu2014} can be also used to arrive to the same conclusions, provided that a suitable procedure is preliminarily implemented in order to reduce an arbitrary non-trivial $\N$ to $\N_{\rm tree}$. Consider the set of RDMs of a quantum state $\rho$ on $\N$, given by $\textbf{R}_{\N} \equiv \{\rho_{\N_k}:\N_k\in\N \}$. By partial trace over the appropriate indexes, we can further produce another set of RDMs, $\textbf{R}_{2} \equiv \{\rho_{ij}:i,j\in\N_k,\,  \forall\N_k \}$, which is the  collection of all two-body RDMs of $\rho$ that can be inferred from $\textbf{R}_{\N}$. Further, $\textbf{R}_2$ may be reduced to a set of $(N-1)$ element $\textbf{R}_{\rm tree} \subseteq \textbf{R}_2$ by only retaining those $\rho_{ij}$s whose indexes constitute the vertices of a tree-graph. Such a reduction is always possible because $\N$ is a non-trivial neighborhood structure. Each neighborhood in $\N$ represents a collection of edges that belong to a connected graph with $N$ vertices. Such a graph is necessarily spanned by a tree-graph containing $(N-1)$ edges, each of which represent a two-body RDM in the list $\textbf{R}_{\rm tree}$. Notice that the construction is not unique in general, as different spanning tree graphs can be considered.
		
Having exhibited a way to reduce $\textbf{R}_{\N}$ to $\textbf{R}_{\rm tree}$, it is easy to see that if two states $\rho,\sigma$ share the same set of RDMs in $\N$, they share the same two-body RDM-list $\textbf{R}_{\rm tree}$ as well. Therefore, any state that is non-UDA relative to a non-trivial $\N$ is also non-UDA relative to $\N_{\rm tree}$. Taking the converse of this statement, since the authors of~\cite{Wu2014} proved that the SLOCC class of the W state is UDA relative to $\N_{\rm tree}$-type neighborhood structures, it follows that they are also UDA relative to any non-trivial $\N$. While this establishes the equivalence between the two results on a formal level, a main advantage of our approach is that it is directly expressed in terms of the original QL constraint of the problem, as desirable from both a physical and a control-engineering perspective.
\end{remark}


\section{UDA pure states as ground states of QL Hamiltonians}
\label{sec:equivalence}

So far we have approached the problem of characterizing the joining set of a quantum state, relative to an arbitrary neighborhood structure, as a search problem in the associated space of density operators. While the DQLS subspace provides insight into the structural features of quantum states that make them UDA or not, a complete mathematical characterization is still lacking. Physically, it also remains unclear how strong a constraint UDA is for a quantum state in a given multipartite setting and, consequently, whether UDA states may commonly occur in typical scenarios. Since most physical Hamiltonians are QL relative to some neighborhood structure, it is then natural to explore what connection may exist between a pure state being a UGS of such a QL Hamiltonian and being UDA with respect to the same constraint. 

\subsection{UGS of QL Hamiltonians are always UDA}
\label{sub:equiv0}

It is easy to verify that pure states that arise as UGS of QL Hamiltonians are necessarily UDA relative to the same neighborhood structure. While this result may be seen as a special instance of the fact that the ground-state space of a $k$-local Hamiltonian is $k$-correlated \cite{Chen2012},  and while we also re-obtained it as a consequence of corollary \ref{coro:dim}, a direct proof is also straightforward:

\begin{prop}
\label{pr:ugsuda}
If $\pure$ is the UGS of a QL Hamiltonian relative to $\N$, then $\pure$ is UDA by its $\N$-neighborhood RDMs.
\end{prop}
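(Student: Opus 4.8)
The plan is to convert the quasi-locality of the parent Hamiltonian into the statement that its expectation value is an affine functional of the $\N$-neighborhood RDMs alone, and then combine this with non-degeneracy of the ground space. Write $H=\sum_k H_{\N_k}\otimes I_{\overline{\N}_k}$ with $\pure$ its unique ground state, so $H\pure = E_0\pure$ and $\langle\phi|H|\phi\rangle > E_0$ for every normalized $|\phi\rangle$ with $\ketbra{\phi}\neq\dpure$. Let $\sigma$ be an arbitrary element of the joining set $\mathcal{M}_{\N}(\pure)$, i.e. $\trn{\overline{\N}_k}(\sigma)=\trn{\overline{\N}_k}(\dpure)\equiv\rho_{\N_k}$ for every $\N_k\in\N$.

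First I would compute $\tr{}{H\sigma}$. Because $H_{\N_k}\otimes I_{\overline{\N}_k}$ acts trivially on $\overline{\N}_k$, each term contracts to a trace over the neighborhood, $\tr{}{(H_{\N_k}\otimes I_{\overline{\N}_k})\sigma}=\tr{}{H_{\N_k}\,\trn{\overline{\N}_k}(\sigma)}=\tr{}{H_{\N_k}\rho_{\N_k}}$, which depends on $\sigma$ only through its RDM on $\N_k$. Summing over $k$ and running the same identity backwards for $\dpure$ yields $\tr{}{H\sigma}=\sum_k\tr{}{H_{\N_k}\rho_{\N_k}}=\tr{}{H\dpure}=E_0$. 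Hence every state in the joining set saturates the ground-state energy.

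Second, I would invoke the variational principle in spectral form: writing $H=E_0 P_0+\sum_{j\ge 1}E_j P_j$ with $E_j>E_0$ and $P_0$ the rank-one projector onto $\Span\{\pure\}$, one has $\tr{}{H\sigma}-E_0=\sum_{j\ge 1}(E_j-E_0)\tr{}{P_j\sigma}\ge 0$, with equality forcing $\tr{}{P_j\sigma}=0$ for all $j\ge 1$, i.e. $\supp(\sigma)\subseteq\mathrm{range}(P_0)=\Span\{\pure\}$. Therefore $\sigma=\dpure$, so $\mathcal{M}_{\N}(\pure)=\{\dpure\}$ and $\pure$ is UDA.

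I do not expect a genuine obstacle here: the whole content is the observation that a QL observable pairs with a state only through its neighborhood marginals, so the argument is short and self-contained, and, notably, it does \emph{not} require $H$ to be frustration-free. The only step deserving a line of care is ``equality forces support in the ground eigenspace,'' which is precisely the spectral computation above and is where non-degeneracy (the ``unique'' in UGS) enters, collapsing that eigenspace to $\Span\{\pure\}$. For the special case of a frustration-free $H$ one could instead appeal to Eq.~\eqref{eq:DQLSstate} to get $\H_{\N}(\pure)=\Span\{\pure\}$ and then Corollary~\ref{coro:dim}, but the direct energy argument covers the general, possibly frustrated, case with no extra effort, and I would present it as the proof.
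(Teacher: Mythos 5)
Your proof is correct and follows essentially the same route as the paper's: quasi-locality of $H$ forces $\tr{}{H\sigma}=\tr{}{H\dpure}$ for every $\sigma\in\mathcal{M}_{\N}(\pure)$, and non-degeneracy of the ground state then pins $\sigma$ down to $\dpure$. Your spectral-decomposition step merely makes explicit the ``$\sigma$ lies in the ground-state space'' inference that the paper states without detail, so the two arguments coincide in substance.
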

\begin{proof}
Let $\pure$ be the UGS of $H = \sum_k H_{\N_k} \otimes I_{\overline{\N}_k}$. Assume that $\ket{\psi}$ is not UDA. Then 
there exists some $\sigma \neq \dpure \in \mathcal{M}_{\N}(\pure)$. Accordingly, $\tr{}{H\dpure} = \tr{}{H\sigma}$, since the energy expectation value of the quantum state depends only on its RDMs in $\N$ due to $H$ being quasi-local with respect to $\N$. It follows that $\sigma$ also belongs to the ground-state space of $H$, contradicting the assumption that $\pure$ is the unique ground state of $H$.  
\end{proof}

Prior to our work, it was not conclusively established whether the UDA property would {\em suffice}, for arbitrary $\pure$, to guarantee the existence of a parent QL Hamiltonian having $\pure$ as its UGS, relative to the given $\N$. Unlike the implication discussed above, it is hard to validate the sufficiency criterion due to the lack of any apparent mathematical connection in this direction \cite{Chen2012,Huber2018}.  Nonetheless, a number of physically relevant UDA pure states are known to arise as the UGS of QL Hamiltonians. For example, the W state on $N$-qubits, which we proved to be UDA relative to arbitrary non-trivial neighborhood structures in the previous section, is also known to be the UGS of a simple XX-type ferromagnetic Hamiltonian in a transverse magnetic field, at least for NN interactions \cite{Bruss2005}. Likewise, injective matrix product states that are DQLS~\cite{Ticozzi2012}, and hence UDA relative to the corresponding neighborhood structure, have also been proved to be the UGS of frustration-free QL Hamiltonians~\cite{Perez-Garcia2007}. Prompted by these positive examples, it is indeed tempting to conjecture the two properties of UDA and UGS to be {\em equivalent} \cite{GuhnePRL}.  We now show, however, that this is {\em false} in general, by exhibiting an explicit counter-example of a pure state that is provably UDA but not UGS.


\subsection{UDA states need not be UGS \\ of QL Hamiltonians}
\label{sec:nonUDA}

Our counter-example involves a six-qubit pure state, which we denote by $\qure_6$ and which, using the compact notation discussed in Sec.~\ref{sec:Wstate}, has the following form:
\begin{equation}
\label{eq:d}
\qure_6 \equiv \frac{1}{\sqrt{2}} \, (|0\rangle_6+|\overline{\text{D}}\rangle_6),
\end{equation}
where $|\overline{\text{D}}\rangle_6$ is a ``modified'' two-excitation Dicke state, with all the NN excitations removed, that is, 
\begin{eqnarray}
\label{eq:barD}
|\overline{\text{D}}\rangle_6 =
\frac{1}{3} ( \, & | & 13\rangle_6+ |14\rangle_6+|15\rangle_6+|24\rangle_6+|25\rangle_6 \nonumber \\
                +\,  & | & 26\rangle_6+|35\rangle_6+|36\rangle_6+|46\rangle_6).
\end{eqnarray}
Accordingly, we can rewrite $\qure_6$ as
\begin{equation}
\qure_6 = \frac{1}{\sqrt{2}}\Big(|0\rangle_6+\frac{1}{3}\!\sum_{\,(j-i) >1}|ij\rangle_6\Big).
\end{equation}	
We choose a fixed neighborhood structure, $\N_2$, given by {\em all} the two-body neighborhoods that are available in this six-qubit system. While our initial identification of this state and QL constraint was guided by numerical investigation in the context of the SDP approach we discuss in Sec. \ref{sec:sdp} \cite{code}, we here show in fully analytical fashion that $\qure_6$ is indeed UDA with respect to $\N_2$, yet it cannot occur as the UGS of any two-local Hamiltonian.

\subsubsection{$\qure_6$ is UDA relative to two-body neighborhoods}

We begin by characterizing the DQLS subspace of $\qure_6$. According to Eq.~\eqref{eq:DQLSstate}, we only need to consider the {\em support} of all the two-body RDMs for this purpose. Thanks to the symmetries that $\qure_6$ enjoys, it can be verified that its RDMs on NN neighborhoods coincide with each other. The same observation holds for non-NN RDMs as well. Let us examine the expectation values of the two-qubit projectors onto the standard basis $\{|00\rangle,|01\rangle,|10\rangle,|11\rangle \}$ of the two-qubit space $\H_2$, for  different combinations of qubit pairs, with respect to $\qure_6$. This lets us conclude that, for NN subsystems, 
\begin{align}
\label{eq:RDM_NN}
& \hspace*{2mm}\ker(\rho_{ij}) = \Span\{|11\rangle \} \;\Rightarrow  \nonumber \\ 
& \supp(\rho_{ij}) = \Span\{|00\rangle,|01\rangle,|10\rangle\}, \quad \forall j = i+1.
\end{align}
Similarly, in the non-NN case, 
\[  \ker(\rho_{ij}) = \varnothing  \Rightarrow \supp(\rho_{ij}) = \H_2, \quad \forall \,j > i+1. \]
Since the support of the non-NN RDMs coincides with the full space, such RDMs contribute trivially to the DQLS subpace $\H_{\N_2}(\qure_6)$, following Eq.~\eqref{eq:DQLSsubspace}. Therefore,
\begin{equation}
\H_{\N_2}(\qure_6) = \H_{\N_{\rm NN}}(\qure_6), 
\label{HNN}
\end{equation}
where $\N_{\rm NN}$ is the NN neighborhood structure under periodic boundary conditions. It follows that, in order to determine $\H_{\N_2}(\qure_6)$, we only need to characterize the DQLS subspace of $\qure_6$ relative to $\N_{\rm NN}$. 

To this end, consider the standard basis for the 64-dimensional six-qubit Hilbert space $\H=\H_6$, namely, 
\begin{equation*}
\mathcal{B}_6 = \{|0\rangle_6,|i_1\rangle_6,|i_1i_2,\rangle_6, \dots,|i_1i_2\dots i_6\rangle_6\},
\end{equation*}  
where each $i_k\in \{ 1, \ldots,6\}$ denotes the position of an excitation. We now determine whether each of these basis elements belongs to $\H_{\N_{\rm NN}}(\qure_6)$. Let $|\phi\rangle \in \H_6$, with the corresponding NN RDMs denoted by $\sigma_{i(i+1)}$. Then $|\phi\rangle \in \H_{\N_{\rm NN}}(\qure_6)$ only if 
\begin{eqnarray}
\label{eq:supp_ij}
\supp(\sigma_{i(i+1)}) & \subseteq & \supp(\rho_{i(i+1)})  \\ 
& = & \Span\{|00\rangle,|01\rangle,|10\rangle \},\quad 
\forall i =1,\dots,6, \nonumber 
\end{eqnarray}
with the second equality following from Eq.~\eqref{eq:RDM_NN}. Thus, it immediately follows that $|0\rangle_6 \in \H_{\N_{\rm NN}}(\qure_6)$, as its NN RDMs are equal to $|00\rangle\langle 00|$, thereby satisfying Eq. \eqref{eq:supp_ij}.  We also observe that for all one-excited basis states $\{|i_1\rangle_6\}$, the NN RDMs are either $|00\rangle\langle 00|$, $|10\rangle\langle 10|$ or $|01\rangle\langle 01|$, depending on the neighborhood chosen. Thus, 
\( |i_1\rangle_6 \in \H_{\N_{\rm NN}}(\qure_6).\)

A similar reasoning applies for any two-excited basis states with non-NN excitations. Therefore,
\( |i_1i_2\rangle_6 \in \H_{\N_{\rm NN}}(\qure_6)\), as long as \( \vert i_1-i_2\vert >1.\) Two-excited basis states of the form $|i(i+1)\rangle_6$ do {\em not} belong to the DQLS subspace, for the RDM $\sigma_{i(i+1)} = |11\rangle\langle 11|$, and therefore the support condition in Eq.~\eqref{eq:supp_ij} is not satisfied. This also includes the state $|16\rangle_5$, given that we are considering periodic boundary conditions. 

Following the same arguments, three-excited basis states $|135\rangle,|246\rangle \in \H_{\N_{NN}}(\qure_6)$. However, any other state from the basis set $\mathcal{B}$ does not belong to the DQLS subspace as their RDMs in the appropriate neighborhoods have non-rivial action on the subspace spanned by $|11\rangle$. In summary, we conclude that 
\begin{eqnarray}
\label{eq:H_NN}
&& \H_{\N_{\rm NN}}(\qure_6) = \Span\Big\{|0\rangle_6,|i\rangle_6,|i_1i_2\rangle_6,
|135\rangle_6,|246\rangle_6 \Big\} , \nonumber \\
&&\quad i,i_1,i_2=1,\dots,6;\, i_2\neq i_1+1;\, \{i_1,i_2\}\neq \{1,6\}.
\end{eqnarray}
Hereafter we shall use the notation $(i_2-i_1)>1$ to refer to the above allowed (nine) non-NN pairs of qubits.

In view of the equality in Eq. \eqref{HNN}, Eq.~\eqref{eq:H_NN} characterizes $\H_{\N_{2}}(\qure_6)$ as well.  We now focus on this 18-dimensional subspace $\H_{\N_{2}}(\qure_6) $ to determine the UDA nature of $\qure_6$. We do so by utilizing the concept of UDA witness introduced in Theorem~\ref{th:UDAwitness}: 
\begin{thm}
\label{thm:UDA}
$\qure_6$ is UDA relative to $\N_2$ as the QL operator 
\begin{equation}
\label{eq:M}
{\mathbb W}_6\equiv  \!\!\sum_{\substack{(i_2-i_1)>1}}{ \!\!(\sigma_{i_1}^+\sigma_{i_2}^+ + \sigma_{i_2}^-\sigma_{i_1}^- })
\end{equation}
serves as a $2$-local UDA witness for the state. 
\end{thm}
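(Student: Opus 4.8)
The plan is to invoke Theorem~\ref{th:UDAwitness} directly. The operator $\mathbb{W}_6$ is manifestly Hermitian and a sum of two-body terms, hence QL relative to $\N_2$, so it only remains to show that $\mathbb{W}_6$ is \emph{uniquely extremized} over the $18$-dimensional subspace $\H_{\N_2}(\qure_6)$ of Eq.~\eqref{eq:H_NN} at $\qure_6$. By the remark following Theorem~\ref{th:UDAwitness}, it is enough to exhibit $\qure_6$ as the unique \emph{maximizer} of $\langle\phi|\mathbb{W}_6|\phi\rangle$ on this subspace (and then use $-\mathbb{W}_6$ in Eq.~\eqref{eq:UDAwitness}). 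Thus the whole argument reduces to diagonalizing the compression $P\,\mathbb{W}_6\,P$, where $P$ is the orthogonal projector onto $\H_{\N_2}(\qure_6)$.

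First I would record the convenient description of $\H_{\N_2}(\qure_6)$ coming from Eq.~\eqref{eq:H_NN}: it is spanned by the computational basis states $|S\rangle_6$ whose excitation set $S$ is an \emph{independent set} of the $6$-cycle (edges $=$ NN pairs, including $\{1,6\}$) -- namely $S=\varnothing$, the six singletons, the nine non-NN pairs, and $S=\{1,3,5\},\{2,4,6\}$. Next I would note that each term $\sigma_{i_1}^{+}\sigma_{i_2}^{+}+\sigma_{i_2}^{-}\sigma_{i_1}^{-}$ of $\mathbb{W}_6$ either adjoins the non-NN pair $\{i_1,i_2\}$ to, or removes it from, the excitation set, and that after compression $|S\cup\{i_1,i_2\}\rangle_6$ survives only when $S\cup\{i_1,i_2\}$ is again independent in the cycle. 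A short finite case check then yields the key \textbf{block structure}: (a) $|0\rangle_6$ is sent by $\mathbb{W}_6$ to $\sum_{(i_2-i_1)>1}|i_1i_2\rangle_6=3|\overline{\text{D}}\rangle_6$, and each non-NN pair state is sent, within the subspace, only back to $|0\rangle_6$ (its other images are $4$-excitation states, never independent); (b) $|135\rangle_6$ is sent to $|1\rangle_6+|3\rangle_6+|5\rangle_6$ and $|246\rangle_6$ to $|2\rangle_6+|4\rangle_6+|6\rangle_6$ (the remaining images are $5$-excitation states), while the six singletons are sent, within the subspace, only into $\{|135\rangle_6,|246\rangle_6\}$; (c) all other matrix elements vanish, since every $4$- or $5$-element subset of $\{1,\dots,6\}$ contains a NN pair (the independence number of the $6$-cycle is $3$). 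Hence $P\,\mathbb{W}_6\,P=X_A\oplus X_B$, with $X_A=|0\rangle_6\langle v|+|v\rangle\langle 0|_6$ on $\Span\{|0\rangle_6\}\oplus\Span\{|i_1i_2\rangle_6\}$ where $|v\rangle\equiv 3|\overline{\text{D}}\rangle_6$, and $X_B$ the direct sum of $|135\rangle_6\langle w_1|+|w_1\rangle\langle 135|_6$ and $|246\rangle_6\langle w_2|+|w_2\rangle\langle 246|_6$ with $|w_1\rangle\equiv|1\rangle_6+|3\rangle_6+|5\rangle_6$, $|w_2\rangle\equiv|2\rangle_6+|4\rangle_6+|6\rangle_6$.

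Each summand is a rank-$\le 2$ off-diagonal operator whose spectrum is read off immediately: $X_A$ has eigenvalues $\pm\|v\|=\pm 3$ on $\Span\{|0\rangle_6,|\overline{\text{D}}\rangle_6\}$ and $0$ elsewhere, with the eigenvalue $+3$ attained \emph{only} at $\tfrac{1}{\sqrt 2}(|0\rangle_6+|\overline{\text{D}}\rangle_6)=\qure_6$; $X_B$ has eigenvalues $\pm\|w_1\|=\pm\sqrt 3$, $\pm\|w_2\|=\pm\sqrt 3$, and $0$. So the largest eigenvalue of $P\,\mathbb{W}_6\,P$ is $3$, strictly above the next value $\sqrt 3$, and its top eigenspace is one-dimensional, spanned by $\qure_6$. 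A standard argument then upgrades this to the strict inequality of Eq.~\eqref{eq:UDAwitness}: for a normalized $|\phi\rangle\in\H_{\N_2}(\qure_6)$ write $|\phi\rangle=\alpha\,\qure_6+|\chi\rangle$ with $|\chi\rangle\perp\qure_6$ in the subspace; then $\langle\qure_6|\mathbb{W}_6|\chi\rangle=3\langle\qure_6|\chi\rangle=0$ and $\langle\chi|\mathbb{W}_6|\chi\rangle\le\sqrt 3\,\|\chi\|^2$, so $\langle\phi|\mathbb{W}_6|\phi\rangle\le\sqrt 3+(3-\sqrt 3)|\alpha|^2\le 3$ with equality iff $|\chi\rangle=0$, i.e. iff $|\phi\rangle$ equals $\qure_6$ up to phase. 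Applying Theorem~\ref{th:UDAwitness} with $-\mathbb{W}_6$ then gives that $\qure_6$ is UDA relative to $\N_2$.

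The main obstacle is establishing steps (b)--(c), i.e. pinning down the compression $P\,\mathbb{W}_6\,P$ exactly: one must verify that \emph{all} leakage matrix elements out of the $18$-dimensional subspace, and all unwanted in-subspace couplings, vanish, so that the operator is genuinely block-diagonal and its spectrum elementary. The cleanest way to organize this bookkeeping is the observation above -- that $\H_{\N_2}(\qure_6)$ is exactly the span of independent-set basis states of the $6$-cycle and that $\mathbb{W}_6$ moves excitations in non-NN pairs -- after which everything reduces to a routine finite check plus a rank-two eigenvalue computation.
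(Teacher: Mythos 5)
Your proof is correct, and it reaches the paper's conclusion by a genuinely different (and arguably more transparent) route. Both arguments share the same overall strategy: take the $18$-dimensional subspace of Eq.~\eqref{eq:H_NN} as given, show that $\qure_6$ uniquely maximizes $\langle\phi|{\mathbb W}_6|\phi\rangle$ over its unit vectors, and then invoke Theorem~\ref{th:UDAwitness} (with the sign flip). Where you differ is in \emph{how} the unique maximization is established. The paper parametrizes a general normalized $|\phi\rangle$ in the subspace by real coordinates and solves the stationarity conditions of a Lagrangian, splitting into the cases $\lambda=\pm 3$, $\lambda=\pm\sqrt{3}$, $\lambda=0$ and comparing the corresponding functional values $\pm 3$, $\pm\sqrt{3}$, $0$. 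You instead compute the compression $P\,{\mathbb W}_6\,P$ directly, observe -- via the excitation/independent-set bookkeeping on the $6$-cycle -- that it splits into a rank-two block $|0\rangle_6\langle v|+|v\rangle\langle 0|_6$ with $\|v\|=3$ and two rank-two blocks of norm $\sqrt{3}$ built on $|135\rangle_6$, $|246\rangle_6$, and read off the spectrum $\{\pm 3,\pm\sqrt{3},0\}$ with a simple top eigenvalue attained only at $\qure_6$. The two computations are mirror images of one another: the Lagrange stationarity equations in the appendix are precisely the eigenvalue equations of your compressed operator, and the paper's three cases correspond exactly to your eigenvalues. What your organization buys is a cleaner uniqueness argument (simplicity of the top eigenvalue plus the explicit gap $3>\sqrt{3}$, upgraded to strict inequality by the standard decomposition $|\phi\rangle=\alpha\qure_6+|\chi\rangle$), and it makes structurally visible \emph{why} the extremizers are what they are, whereas the paper's case analysis requires checking by hand that no other stationary solutions survive. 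The only bookkeeping you must certify -- that all leakage terms are annihilated by the projection and no unwanted in-subspace couplings occur -- is handled correctly by your parity-of-excitation-number argument together with the fact that the subspace contains no basis states with four or more excitations other than $|135\rangle_6$, $|246\rangle_6$; this is a finite check equivalent in effort to the paper's verification that Eqs.~(A6)--(A12) force the claimed solutions.
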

\noindent
Since the proof is lengthy, we defer it to Appendix \ref{proofs}.

\subsubsection{$\qure_6$ is not UGS of any two-body Hamiltonian}

Having established that $\qure_6$ is UDA relative to $\N_2$, we next look at the properties of $2$-local Hamiltonians for which $\qure_6$ is a ground state. We intend to show that such QL Hamiltonians have {\em at least a twofold degeneracy} for their ground-state space, thereby ruling out the possibility that $\qure_6$ may arise as their UGS.

Before doing so, we make an observation about the ground-state space of general (not necessarily QL) Hamiltonians, under the action of a group symmetrization operation \cite{LV99,PZ}: 
\begin{prop}
\label{pr:symm}
Let $H$ be a Hamiltonian with its ground-state space denoted by $g.s.(H)$. Assume that there exists a pure state $|\psi_g\rangle \in g.s.(H)$ and a finite group of unitary operations $\mathcal{G} \equiv \{G_1,\dots,G_{|{\cal G|}} \}$, such that $G_k|\psi_g\rangle = |\psi_g\rangle$, for all $G_k\in\mathcal{G}$. Let a new ${\cal G}$-symmetrized Hamiltonian be constructed from $H$ via 
\begin{equation}
\label{eq:barH}
\overline{H}  \equiv \frac{1}{|{\cal G}|} \sum_{k = 1}^{ |{\cal G}|} G^\dagger_k H G_k . 
\end{equation}
Then we have that 
\(g.s.(\overline{H}) \subseteq g.s.(H).\)
\end{prop}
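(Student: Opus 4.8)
The plan is to show that the symmetrized Hamiltonian $\overline{H}$ has the \emph{same} ground-state energy as $H$, and then to extract the claimed inclusion from operator positivity, using that the conjugates $G_k^\dagger H G_k$ being averaged are all isospectral with $H$. First I would set $E_0\equiv\min\mathrm{spec}(H)$, so that $H-E_0\mathbbm{1}\geq 0$ and $H|\psi_g\rangle=E_0|\psi_g\rangle$. Since each $G_k$ is unitary, $G_k^\dagger H G_k$ has the same spectrum as $H$ and hence $G_k^\dagger H G_k\geq E_0\mathbbm{1}$ for every $k$; averaging gives $\overline{H}\geq E_0\mathbbm{1}$. Next I would check that this lower bound is attained at $|\psi_g\rangle$: from $G_k|\psi_g\rangle=|\psi_g\rangle$ and unitarity one also gets $G_k^\dagger|\psi_g\rangle=|\psi_g\rangle$, so $G_k^\dagger H G_k|\psi_g\rangle = G_k^\dagger H|\psi_g\rangle = E_0 G_k^\dagger|\psi_g\rangle = E_0|\psi_g\rangle$ for each $k$, whence $\overline{H}|\psi_g\rangle=E_0|\psi_g\rangle$. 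Thus $\min\mathrm{spec}(\overline{H})=E_0$ and $|\psi_g\rangle\in g.s.(\overline{H})$.

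Then, given any normalized $|\chi\rangle\in g.s.(\overline{H})$, i.e.\ $\overline{H}|\chi\rangle=E_0|\chi\rangle$, I would write $E_0=\langle\chi|\overline{H}|\chi\rangle=\frac{1}{|\mathcal{G}|}\sum_k\langle\chi|G_k^\dagger H G_k|\chi\rangle$ as an average of real numbers each $\geq E_0$; this forces every summand to equal $E_0$. Since $\mathcal{G}$ is a group it contains the identity, which occurs as one of the $G_k$ in the sum, and the corresponding term gives $\langle\chi|H|\chi\rangle=E_0$. Because $H-E_0\mathbbm{1}\geq 0$ and $\langle\chi|(H-E_0\mathbbm{1})|\chi\rangle=0$, I can conclude $(H-E_0\mathbbm{1})^{1/2}|\chi\rangle=0$, hence $H|\chi\rangle=E_0|\chi\rangle$, i.e.\ $|\chi\rangle\in g.s.(H)$; since $|\chi\rangle$ was arbitrary this yields $g.s.(\overline{H})\subseteq g.s.(H)$.

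The only delicate point — the ``hard part'' — is showing that $\overline{H}$ and $H$ share a common ground-state energy, which is exactly where the hypothesis that $|\psi_g\rangle$ is invariant under \emph{all} of $\mathcal{G}$ is indispensable: without it, $\overline{H}$ could sit strictly above $E_0$, the averaged terms would no longer be pinned to their common lower bound, and the per-term argument above would collapse. Everything else is routine operator positivity, so the proof should be short.
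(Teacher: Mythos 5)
Your proof is correct and follows essentially the same route as the paper's: isospectrality of the conjugates $G_k^\dagger H G_k$ gives $\overline{H}\geq E_0\mathbbm{1}$, invariance of $|\psi_g\rangle$ shows the bound is attained, and then the identity element of $\mathcal{G}$ together with operator positivity pins every summand and forces any ground state of $\overline{H}$ into $g.s.(H)$. The paper merely normalizes $E_0=0$ and phrases the argument via expectation values $\tr{}{\overline{H}\ketbra{\phi}}$, with your square-root step made implicit, so the differences are purely presentational.
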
 

\begin{proof}
Assume, without loss of generality, that the ground-state energy of $H$ is zero. This implies $\overline{H}\geq0$, as each term in the sum of Eq.~\eqref{eq:barH} has the same spectrum as $H$, by unitarity of each $G_k$. Next we notice that
\begin{eqnarray*}
\tr{}{\overline{H}|\psi_g\rangle\langle\psi_g|} &=& \frac{1}{|{\cal G}|} \sum_k \tr{}{HG_k|\psi_g\rangle\langle\psi_g| G_k^\dagger} \\ 
&=& \tr{}{H|\psi_g\rangle\langle\psi_g|} = 0,
\end{eqnarray*}
since by assumption $|\psi_g\rangle$ is invariant under $\mathcal{G}$. The above relation also establishes that $|\psi_g\rangle \in g.s(\overline{H})$, with a corresponding ground-state energy of zero, since $\overline{H}\geq0$ as already seen. Therefore, in order to fully characterize  $g.s.(\overline{H})$, we need to figure out which other pure states have zero expectation value with respect to $\overline{H}$.
	
Consider an arbitrary state $|\phi\rangle \in \H$. Then
\[\tr{}{\overline{H}|\phi\rangle\langle\phi|} = \frac{1}{|{\cal G}|}\sum_k \tr{}{HG|\phi\rangle\langle\phi| G_k^\dagger} \geq 0, \]
where the equality holds if and only if 
\begin{equation}
\label{eq:g.s.}
|\phi_k\rangle \equiv G_k|\phi\rangle \in g.s.(H),\quad \forall\, k.
\end{equation}
A special instance of Eq.~\eqref{eq:g.s.} occurs for $G_k = I$, which then implies $|\phi\rangle\in g.s.(H)$. Thus 
\[g.s.(\overline{H}) \subseteq g.s.(H), \]
and the two subspaces coincide if and only if Eq.~\eqref{eq:g.s.} holds for all ground-state elements of $H$.
\end{proof}

By construction, the symmetrized Hamiltonian in Eq. \eqref{eq:g.s.} is projected onto the commutant of the group algebra ${\mathbb C}{\cal G}$, hence in particular it commutes with all the unitaries in $\mathcal{G}$ \cite{LV99,PZ}. Returning to our problem, we now examine how such a symmetrization procedure can be useful for our analysis. The state $\qure_6$ has a number of symmetry properties such as invariance under cyclic permutations  as well as reflections of qubits. If we are able to find a unitary group that leaves $\qure_6$ invariant, while preserving the QL nature of Hamiltonians relative to $\N_2$, we can thus restrict our investigation to the class of symmetrized QL Hamiltonians and exploit their simpler structure. Proposition~\ref{pr:symm} shows that such symmetrized Hamiltonians necessarily have $\qure_6$ in their ground-state spaces. Therefore, if we can show that there exists no symmetrized QL Hamiltonian for which $\qure_6$ is the UGS, we may infer that such a scenario is impossible for general QL Hamiltonians as well.

Consider the group of cyclic permutations on six-qubits, namely, \( \mathcal{P} \equiv  \{ P, P^2,\dots,P^6 = I\},\) with $P$ representing a unitary operator in $\H_6$ whose action is to permute the qubits in the order $(123456)\mapsto (234561)$. Similarly, we also consider the group generated by the (unitary) reflection operator $R$ whose operation is to swap the qubits as follows: $(123456) \mapsto (654321)$.  We denote this latter group by \(\mathcal{R} \equiv \{R,R^2 = I \}.\) Now, let
\begin{equation}
\label{eq:G}
\mathcal{G}_6 \equiv \langle R,P\rangle
\end{equation}
be the finite group generated by $P,R$ together, namely, the set of unitary operators on $\H_6$ that can be written 
as products of elements in ${\cal P}$ or ${\cal R}$. Clearly, 
\[G_k(\qure_6) = \qure_6,\quad \forall \,G_k \in \mathcal{G}_6. \] 
It is essential to note that all the group operations  in $\mathcal{G}$ are decomposable into a series of two-qubit swap operations, that are non-entangling when acted up on product basis states. Therefore, basis transformations by the unitaries belonging to $\mathcal{G}_6$ do not alter the QL structure of the operators relative to the neighborhood structure $\N_2$.

We now consider the class of QL Hamiltonians that have $\qure_6$ in their ground-state spaces {\em and} are symmetrized with respect to $\mathcal{G}_6$ given in Eq.~\eqref{eq:G}, to arrive at the following theorem: 
\begin{thm}
\label{thm:notUGS}
There exists no Hamiltonian $\overline{H}$ that is QL relative to $\N_2$ and symmetrized with respect to the 
group $\mathcal{G}_6$, such that $\overline{H}$ has $\qure_6$ as its UGS.
\end{thm}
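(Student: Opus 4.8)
The plan is to exploit the $\mathcal{G}_6$-symmetry to reduce the space of candidate parent Hamiltonians to a low-dimensional family, then to show directly that every member of that family has a ground-state degeneracy. First I would parametrize the most general two-body operator on six qubits that is QL relative to $\N_2$, and project it onto the commutant of $\mathbb{C}\mathcal{G}_6$; by Proposition~\ref{pr:symm} it suffices to work with such symmetrized Hamiltonians $\overline{H}$, since $\qure_6$ is $\mathcal{G}_6$-invariant and any QL $H$ with $\qure_6$ in its ground space yields a symmetrized QL $\overline{H}$ with $\qure_6$ still in its ground space. The cyclic-plus-reflection symmetry collapses the independent neighborhood terms dramatically: up to the $\mathcal{G}_6$-action there is essentially one orbit of NN pairs and one orbit of each type of non-NN pair (``distance 2'' and ``distance 3''), so $\overline{H}$ is a linear combination of a handful of symmetrized two-body operators. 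The key structural fact I would then use is that, since $\qure_6$ is UDA with witness $\mathbb{W}_6$, any QL Hamiltonian with $\qure_6$ as a ground state must have $\qure_6$ as a \emph{frustration-free} ground state precisely when the DQLS subspace is one-dimensional (Eq.~\eqref{eq:DQLSstate}); but $\H_{\N_2}(\qure_6)$ is 18-dimensional, not one-dimensional, so no \emph{frustration-free} QL parent Hamiltonian exists. The remaining and harder task is to rule out \emph{frustrated} parent Hamiltonians.

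For the frustrated case I would argue as follows. Because $\overline{H}$ commutes with every element of $\mathcal{G}_6$, its ground space is a $\mathcal{G}_6$-invariant subspace; moreover $\overline{H}$ commutes with the total-excitation-number operator only if each two-body term does, which is generally false for the witness-type terms $\sigma^+\sigma^+ + \sigma^-\sigma^-$, so I should not assume a global $U(1)$ symmetry. Instead I would restrict attention to the DQLS subspace: by Theorem~\ref{th:DQLSsubspace} and Corollary~\ref{coro:dim}, any other ground state $\sigma$ of $\overline{H}$ (which would necessarily also lie in $\mathcal{M}_{\N_2}(\overline{H})$ if it shared $\qure_6$'s energy) has support in $\H_{\N_2}(\qure_6)$, so it is enough to analyze the $18\times 18$ matrix obtained by compressing $\overline{H}$ to this subspace. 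The plan is then: write $\overline{H}$ restricted to $\H_{\N_2}(\qure_6)$ as a matrix depending on the few free symmetrized coupling parameters, block-diagonalize it using the irreducible representations of $\mathcal{G}_6$ acting on the 18-dimensional space, and show that in order for $\qure_6$ (which lives in the trivial isotypic component) to be a ground state, the parameters must be tuned so that at least one other state — either another vector in the trivial component, or an entire nontrivial irrep block — becomes degenerate with it. Equivalently, I expect to show that demanding $\langle\qure_6|\overline{H}|\qure_6\rangle \le \langle\phi|\overline{H}|\phi\rangle$ for all $|\phi\rangle$ in the 18-dimensional space forces an equality for some $|\phi\rangle \ne \qure_6$, because the symmetry constraints leave too few knobs to isolate a single vector as the strict minimizer.

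The main obstacle is the last step: proving that the symmetrized, QL-constrained family is genuinely \emph{too small} to produce $\qure_6$ as a non-degenerate minimum over the 18-dimensional DQLS subspace. Concretely, I expect to reduce this to a finite linear-algebra/linear-programming statement — ``the cone of symmetrized QL Hamiltonians that are positive semidefinite on $\H_{\N_2}(\qure_6)$ and annihilate $\qure_6$ also annihilates some orthogonal vector'' — and the work is in identifying that forced second zero mode explicitly. A natural candidate is a state related to $\qure_6$ by flipping the ``$|0\rangle_6$ versus $|\overline{\text{D}}\rangle_6$'' relative phase, i.e. $\tfrac{1}{\sqrt 2}(|0\rangle_6 - |\overline{\text{D}}\rangle_6)$, or a permutation-symmetric partner inside the three-excitation sector spanned by $|135\rangle_6, |246\rangle_6$; I would check whether symmetry and two-locality force any symmetrized $\overline{H}$ with $\overline{H}\qure_6 = 0$ (after shifting the ground energy to zero) to also kill one such vector, which would establish the claimed twofold degeneracy and hence that $\qure_6$ is not UGS of any two-body Hamiltonian. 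This dovetails with the SDP picture promised in Sec.~\ref{sec:sdp}: the obstruction is exactly the failure of the dual SDP to attain its optimum, manifested here as the impossibility of choosing the free couplings to separate $\qure_6$ from its symmetry-forced companion.
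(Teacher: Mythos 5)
Your plan correctly identifies the right candidate for the forced degeneracy (effectively $\tfrac{1}{\sqrt 2}(|0\rangle_6-|\overline{\mathrm D}\rangle_6)$, i.e.\ that $|0\rangle_6$ and $|\overline{\mathrm D}\rangle_6$ get separately annihilated), and your disposal of the frustration-free case via Eq.~\eqref{eq:DQLSstate} is fine. But there are two genuine problems. First, the reduction to the $18$-dimensional DQLS subspace is justified by a misreading of Theorem~\ref{th:DQLSsubspace} and Corollary~\ref{coro:dim}: those results constrain states that share the \emph{same RDMs} as $\qure_6$, not states that merely share its energy. A degenerate ground state of a frustrated $\overline{H}$ has no obligation to lie in $\mathcal{M}_{\N_2}(\qure_6)$, hence no obligation to have support in $\H_{\N_2}(\qure_6)$, so ``it is enough to analyze the $18\times 18$ compression'' does not follow from the cited results. (Searching for a zero mode inside that subspace would still be a \emph{sufficient} route, because $\overline{H}\ge 0$ globally promotes any vector with vanishing expectation to an exact ground state; but then you must keep the full-space constraints $\overline{H}\ge 0$ and $\overline{H}\qure_6=0$, which couple to components, e.g.\ three- and four-excitation ones, lying outside the DQLS subspace, so the compression alone does not carry all the information you need.)

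Second, and more importantly, the core of the argument is only promised, not delivered: the claim that the symmetrized QL family is ``too small'' to isolate $\qure_6$ as a strict minimizer is exactly the nontrivial content of the theorem, and the irrep block-diagonalization you propose is never carried out. The paper's proof closes this by a different and more structured decomposition: write $\overline{H}=\overline{H}_2+\overline{H}_{-2}+\overline{H}_1+\overline{H}_{-1}+\overline{H}_0$ according to the change in excitation number; use $\mathcal{G}_6$-symmetry to reduce each sector to $3$, $4$, and $7$ real parameters; split $\overline{H}\qure_6=0$ into per-excitation-sector equations; then show by explicit linear systems (projecting onto the eigenvalue-one eigenvectors of the cyclic shift $P$ in the four- and three-excitation sectors, and checking an invertibility condition) that $\overline{H}_{\pm2}=\overline{H}_{\pm1}=0$. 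Hence any admissible $\overline{H}$ is excitation-preserving, so $\overline{H}\qure_6=0$ forces $\overline{H}|0\rangle_6=0=\overline{H}|\overline{\mathrm D}\rangle_6$ separately, giving the twofold degeneracy. Note that the excitation-preserving property is \emph{derived}, not assumed — your remark that one ``should not assume a global $U(1)$ symmetry'' is correct, but your outline has no replacement for this derivation, which is precisely where the two-locality and $\mathcal{G}_6$-symmetry constraints do their work. Without that step (or an equivalent completed computation in your irrep framework), the proposal does not establish the theorem.
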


\noindent
While a complete proof is deferred to Appendix \ref{proofs}, this establishes $\qure_6$ as a counter-example to the conjecture that UDA pure states are the UGS of a QL Hamiltonian.

\smallskip

\begin{remark}
This theorem also highlights the fact that for a general $\pure$ under a given neighborhood structure $\N$, having a UDA witness is a \emph{strictly} weaker property than  it being the UGS of a QL Hamiltonian.
In particular, notice that $\ket{\Psi}_6$ is \emph{not} an eigenstate of the witness $\mathbb{W}_6$ given in Eq.~\eqref{eq:M}. In fact, if $\pure$ is the not the UGS of any QL Hamiltonian, it has to be case that it is also not the eigenstate of its UDA witness $\mathbb{W}$, assuming the latter exists. To see this, observe that the frustration-free QL Hamiltonian constructed as a sum of the projectors onto $\textup{supp}(\rho_{\N_k})$, for all $\N_k\in\N$, has the DQLS subspace $\H_{\N}(\pure)$ as its ground state space~\cite{Ticozzi2012}. Now, $\mathbb{W}$ has higher energy for all states but $\pure$ in $\H_{\N}(\pure)$. Thus, an appropriate linear combination of these two QL operators will have $\pure$ as the UGS, if $\pure$ is an eigenstate of $\mathbb{W}$. The crucial difference to be noted here is that while being UGS forces the pure state to be the extremal eigenstate of a QL Hamiltonian, in order to be UDA it is sufficient that the state extremizes a QL Hamiltonian restricted to the DQLS subspace, without being its eigenstate. 
\end{remark}


\section{UGS vs. UDA properties: Semidefinite programming approach}
\label{sec:sdp}

Having established the inequivalence between the two properties of UDA and UGS by counterexample, we present a framework that allows exploring the relation between these two classes of pure states more generally, while also hinting at how such counterexamples may arise, at least from a mathematical standpoint. 

As we mentioned in the Introduction, the key idea is to describe the search problem for determining the UDA nature of a general pure state $\pure\in\H$ relative to a non-trivial $\N$ as a SDP. SDPs form a subclass of convex optimization problems \cite{Boyd} that are ubiquitous in engineering and physics. They emerge naturally in the quantum context, as the sets of density operators and quantum channels can be described by intersections of convex cones with linear constraints. The existence of a number of efficient solvers also makes SDP a very practical numerical optimization tool \cite{code}. After formalizing the UDA problem as an SDP, we construct the dual of the original problem and show that it can be interpreted as the optimization program to find a QL Hamiltonian for which $\pure$ has the lowest energy expectation value, subject to certain convex constraints. This dual program may or may not attain its optimal value depending on $\pure$. As a main result, we show that for a UDA pure state $\pure$ to be also a UGS of some QL Hamiltonian, relative to a same, fixed $\N$, it is {\em necessary and sufficient} that the dual optimal value is indeed attained. 

\begin{table*}[t]
	\normalsize
	\centering
	\begin{tabular}{l l l}
		(a) \quad \: & {\tt Minimize:} & $\tr{}{\dpure \sigma},$ \\
		                  & {\tt subject to:} &  (i.a) $\Phi_{\N}(\sigma) = \Phi_{\N}(\dpure)$, \\
			         &	 & (ii.a) $\sigma \geq 0.$
	\end{tabular} 
	\hspace*{1.1cm}
	\begin{tabular}{l l l}
		(b) \quad \: & {\tt Maximize:} &  $ -\tr{}{H\dpure}$, \\
		& {\tt subject to:} & (i.b) $H+\dpure\geq 0$, \\
		&	 & (ii.b) $H = \Phi_{\N}(H),$ \\
		& 	 & (iii.b) $H =H^\dagger.$
	\end{tabular} 
	\caption{SDP programs relevant to probe the relationship between UDA and UGS properties of 
	multipartite pure states. Left panel, (a): UDA primal problem. Right panel, (b): UDA dual problem. 
	See Appendix~\ref{sec:SDP} for additional detail on derivations.} 
	\label{tab:sdp}
\end{table*}

\subsection{The UGS problem as a dual optimization problem}

It is easy to see why the problem of determining the UDA nature of $\pure$ can be formulated as a SDP. If $\pure$ is non-UDA relative to $\N$, there exists another quantum state $\sigma$ such that $\sigma \in \mathcal{M}_{\N}(\pure)$. Clearly, the search space for such a $\sigma$, which is nothing but $\mathcal{D}(\H)$, is a convex set. More importantly, the requirement that the list of RDMs in $\N$ of $\sigma$ and $\dpure$ must coincide also imposes a  linear constraint on $\mathcal{B}(\H)$, as we explain next.

We first observe that the mapping from a quantum state $\rho \in \mathcal{D}(\H)$ to the list of its RDMs on $\N$, given by $\{\rho_{\N_k},\, \forall\N_k\in\N  \}$, can be described in terms of an orthogonal projector, say, $\Phi_{\N}: \mathcal{B}(\H) \to \mathcal{B}(\H)$. To see this, consider the operator basis for $\mathcal{B}(\H_a)$ given by 
\[\mathcal{X}_a \equiv \{I_a,X_{i_a}^a:i_a = 1,\dots,d_a^2-1 \},  \]
where $\{X_{i_a}^a\}$ are traceless Hermitian operators that are orthonormal with respect to the Hilbert-Schmidt norm, that is, up to the imaginary unit, they form a Lie algebra $\mathfrak{su}(d_a^2-1)$. Let then $\mathcal{X}$ denote the orthonormal basis for $\mathcal{B}(\H)$ obtained from $\{\mathcal{X}_a\}$ via the standard $N$-fold tensor-product construction \cite{Nielsen-Chuang}. For every fixed $X_i \in \mathcal{X}$ define a mapping $\Phi_{X_i}$ 
$: \mathcal{B}(\H) \to \mathcal{B}(\H) $ 
as follows:
\[\Phi_{X_i}(M) = \tr{}{X_iM}X_i,\quad \forall \, 
M\in\mathcal{B}(\H). \]
Evidently, $\Phi_{X_i}$ is a projector onto $\mathcal{B}(\H) $ since $\Phi_{X_i}^2 = \Phi_{X_i}$, following the orthonormality of the basis vectors in $\mathcal{X}$. 

For a fixed neighborhood $\N_k\in\N$, we can now define a basis set for the subspace  $\mathcal{B}(\H_{\N_k}) \otimes I_{\overline{\N}_k}$ by letting $\mathcal{X}_{\N_k} \equiv \{X_i\in\mathcal{X}: X_i = X^i_{\N_k}\otimes I_{\overline{\N}_k}\}$. The basis elements in $\mathcal{X}_{\N_k}$ can then be used to define the  RDM $\rho_{\N_k}$ via 
\[\rho_{\N_k}\otimes I_{\overline{\N}_k} \equiv  \sum_{X_j\in\mathcal{X}_{\N_k}} \Phi_{X_j}(\rho). \]
For the full neighborhood structure $\N = \{\N_k \}_{k=1}^M$, we can finally consider the union  \( \mathcal{X}_{\N} \equiv \cup_{\N_k\in\N}\mathcal{X}_{\N_k} \) and define a corresponding mapping $\Phi_{\N}$ such that 
\begin{equation}
\Phi_{\N} \equiv \sum_{X_i\in \mathcal{X}_{\N}} \Phi_{X_i}. 
\label{eq:qlp}
\end{equation}
It can be seen that $\Phi_{\N}$ is an orthogonal projector on $\mathcal{B}(\H)$ because it enjoys the following properties:
\begin{align*}
& \Phi_{X_i}\Phi_{X_j} = \delta_{ij}\Phi_{X_i},\: \forall \, X_i,X_j\in\mathcal{X}_{\N} \:\Rightarrow  \:\Phi_{\N}^2 = \Phi_{\N},\\ 
& \tr{}{\!M_1^\dagger\Phi_{\N}(M_2)\!} \!=\! \mathrm{Tr}\Big({[\Phi_{\N}(M_1)]^\dagger M_2}\Big), \quad\forall M_i\in \mathcal{B}(\H).
\end{align*}
By construction, $\Phi_{\N}$ is the desired mapping in $\mathcal{B}(\H)$ that effects the reduction from $\rho$ to its list of RDMs $\{\rho_{\N_k} \}$ on $\N$. It is immediate to verify the following:
\begin{coro}
Given any two states $\rho,\sigma\in\mathcal{D}(\H)$, the set of RDMs relative to any non-trivial $\N$ coincide, $\{\rho_{\N_k},\forall\,\N_k\in\N \} = \{\sigma_{\N_k},\forall\,\N_k\in\N \}$, if and only if their QL projections coincide, \( \Phi_{\N}(\sigma) = \Phi_{\N}(\rho).\)
\end{coro}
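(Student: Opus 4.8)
The plan is to prove both implications directly from the orthogonal-decomposition structure of $\Phi_{\N}$, exploiting the relations $\Phi_{X_i}\Phi_{X_j} = \delta_{ij}\Phi_{X_i}$ for $X_i,X_j\in\mathcal{X}_{\N}$ and the identity $\rho_{\N_k}\otimes I_{\overline{\N}_k} = \sum_{X_j\in\mathcal{X}_{\N_k}}\Phi_{X_j}(\rho)$ already established above. These two facts reduce the statement to elementary bookkeeping with the operator basis $\mathcal{X}$.

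For the ``only if'' direction, suppose $\rho_{\N_k} = \sigma_{\N_k}$ for every $\N_k\in\N$. Tensoring with $I_{\overline{\N}_k}$ and using the displayed identity gives $\sum_{X_j\in\mathcal{X}_{\N_k}}\Phi_{X_j}(\rho) = \sum_{X_j\in\mathcal{X}_{\N_k}}\Phi_{X_j}(\sigma)$. Applying $\Phi_{X_l}$ to both sides for a fixed $X_l\in\mathcal{X}_{\N_k}$ and invoking $\Phi_{X_l}\Phi_{X_j}=\delta_{lj}\Phi_{X_l}$ isolates the single term $\Phi_{X_l}(\rho) = \Phi_{X_l}(\sigma)$. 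Since every $X_i\in\mathcal{X}_{\N}=\cup_k\mathcal{X}_{\N_k}$ lies in at least one $\mathcal{X}_{\N_k}$, this equality holds for all $X_i\in\mathcal{X}_{\N}$, and summing over that index set yields $\Phi_{\N}(\rho)=\Phi_{\N}(\sigma)$.

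For the ``if'' direction, suppose $\Phi_{\N}(\rho)=\Phi_{\N}(\sigma)$. Fix $\N_k\in\N$ and $X_j\in\mathcal{X}_{\N_k}\subseteq\mathcal{X}_{\N}$. Composing with $\Phi_{X_j}$ and using $\Phi_{X_j}\Phi_{\N}=\Phi_{X_j}$ (which follows by expanding $\Phi_{\N}=\sum_{X_i\in\mathcal{X}_{\N}}\Phi_{X_i}$ and applying orthogonality) gives $\Phi_{X_j}(\rho)=\Phi_{X_j}(\Phi_{\N}(\rho))=\Phi_{X_j}(\Phi_{\N}(\sigma))=\Phi_{X_j}(\sigma)$. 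Summing over $X_j\in\mathcal{X}_{\N_k}$ recovers $\rho_{\N_k}\otimes I_{\overline{\N}_k}=\sigma_{\N_k}\otimes I_{\overline{\N}_k}$, and tracing out $\overline{\N}_k$ (equivalently, dividing by $\dimension{\H_{\overline{\N}_k}}$) gives $\rho_{\N_k}=\sigma_{\N_k}$. As $\N_k$ was arbitrary, the two RDM lists coincide.

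The content is essentially routine, so I do not anticipate a genuine obstacle; the one point deserving care is that $\mathcal{X}_{\N}$ is a \emph{union} of the sets $\mathcal{X}_{\N_k}$, which generally overlap — the normalized identity, for instance, belongs to every $\mathcal{X}_{\N_k}$ — so one must treat $\mathcal{X}_{\N}$ as a set rather than a multiset when forming $\Phi_{\N}=\sum_{X_i\in\mathcal{X}_{\N}}\Phi_{X_i}$, and check that the argument only ever uses membership $X_i\in\mathcal{X}_{\N_k}$ for \emph{some} $k$, never a multiplicity. Since $\rho$ and $\sigma$ are density operators, the common components (in particular the identity component) automatically agree, so the overlap is harmless.
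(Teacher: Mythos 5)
Your proof is correct, and it is precisely the verification the paper has in mind: the paper states the corollary without proof ("immediate to verify"), since it follows by construction of $\Phi_{\N}$ from the orthogonal component projectors $\Phi_{X_i}$ and the identity $\rho_{\N_k}\otimes I_{\overline{\N}_k}=\sum_{X_j\in\mathcal{X}_{\N_k}}\Phi_{X_j}(\rho)$, which is exactly what you unpack in both directions. Your attention to the fact that $\mathcal{X}_{\N}$ is a set (so overlapping neighborhoods contribute each shared basis element only once, and only membership, never multiplicity, is used) is a sound and worthwhile point of care.
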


Following the above, the SDP for determining the UDA nature of $\pure$ relative to $\N$ is summarized in Table~\ref{tab:sdp}(a). The resulting primal program, and its dual, are in fact a pair of {\em linear} programs in standard form \cite{Boyd}. The aim of the primal program is to search for a quantum state $\sigma$ which is in $\mathcal{M}_{\N}(\pure)$, such that it has the minimum Hilbert-Schmidt inner product with $\dpure$. If such a  state exists and is not $\dpure$ itself, then clearly $\dpure$ is not UDA. Let the optimal value of the primal problem be denoted by $\alpha$. It is easy to see that  $\alpha \in[0,1 ]$ depending on the inputs $\pure$ and $\N$. Specifically, $\alpha = 1$ if and only if $\pure$ is UDA relative to $\N$, in other words $\mathcal{M}_{\N}(\pure) = \{\pure \}$. If $\alpha = 0$, it means that there exists some $\sigma \in \mathcal{M}_{\N}(\pure)$ such that the two states are orthogonal to each other. Notice that constraint (i.a) ensures that $\sigma$ has unit trace, since the identity is included in the basis set $\mathcal{X}_{\N}$ which we use to construct $\Phi_{\N}$ in Eq. \eqref{eq:qlp}. Combined with constraint (ii.a), this guarantee that the SDP is indeed searching over ${\cal D}(\H)$. Also note that the optimal value $\alpha$ is {\em always attained} irrespective of the problem inputs, as a consequence of the fact that the feasibility set of the primal problem, which is $\mathcal{M}_{\N}(\pure)$, is a non-empty, closed and convex set \cite{Boyd}.

Following the procedure outlined in Appendix~\ref{sec:SDP}, we can construct the dual to the UDA primal problem, resulting in the SDP given in Table~\ref{tab:sdp}(b). This dual problem searches for a Hermitian operator $H,$ which can be thought as a Hamiltonian, that has minimum energy expectation value with respect to $\pure$, subject to the constraints that $H$ is QL relative to $\N$ and $H+\dpure$ is a positive-semidefinite operator. Constraint (i.b), which also results from the dual problem construction, does not lend itself to an immediate physical interpretation. It will nonetheless be instrumental to showing that {\em if} a solution to the dual problem exists, the operator $H$ that optimizes the same is guaranteed to have $\pure$ as its UGS.

Let the optimal value for the dual program be given by $\beta$. Appendix~\ref{sec:SDP} shows that strong duality holds, namely, that $\alpha=\beta,$ for all $\pure$, $\N$, thanks to the refined version of Slater's condition for programs with affine inequality constraints \cite{Boyd}. Hence, it also follows that $\beta\in[0,1]$. In particular, we are interested in the case where $\pure$ is UDA relative to $\N$, that is, $\alpha = \beta = 1$. However, it is important to appreciate that strong duality for a primal-dual pair does {\em not} guarantee that the latter attains the optimum in general: the crucial difference between the primal and the dual problem is that the set over which we are optimizing the dual is {\em unbounded}. Thus, while the superior limit for the dual functional is guaranteed to be equal to the primal optimum, this in certain cases may be achieved only in the limit of unbounded operators $H$. The following result, which complements Proposition \ref{pr:ugsuda}, can be established: 
\begin{thm}
\label{thm:UDA_UGS}
Let $\pure \in \H$ be UDA relative to a non-trivial $\N$, and assume that the dual SDP attains its optimal value for the pair $\pure, \N$. Then there exists a Hamiltonian that is QL relative to $\N$ for which $\pure$ is the UGS.
\end{thm}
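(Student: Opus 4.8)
The plan is to read off the required parent Hamiltonian directly from a dual-optimal operator, using only the dual feasibility constraints of Table~\ref{tab:sdp}(b) together with strong duality. First I would invoke strong duality $\alpha=\beta$, established in Appendix~\ref{sec:SDP} via the refined Slater condition for affine inequality constraints: since $\pure$ is UDA we have $\alpha=1$, hence $\beta=1$. The standing hypothesis that the dual attains its optimal value then supplies an actual Hermitian operator $H$ that is QL relative to $\N$ by constraints (ii.b)--(iii.b), satisfies $H+\dpure\geq 0$ by constraint (i.b), and realizes the value $-\tr{}{H\dpure}=1$, i.e. $\bra{\psi}H\pure=-1$.

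The second step is to verify that $\pure$ is a ground state of $H$ at energy $-1$. Applying constraint (i.b) to an arbitrary normalized $\ket{\phi}\in\H$ gives $\bra{\phi}H\ket{\phi}\geq-|\inprod{\phi}{\psi}|^{2}\geq-1$, the last inequality by Cauchy--Schwarz; hence $H\geq-\identity$, so the ground-state energy of $H$ is at least $-1$. Since $\bra{\psi}H\pure=-1$, this lower bound is saturated by $\pure$, so $\pure$ is a ground state and $-1$ is exactly the ground energy.

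The third step is uniqueness. If $\ket{\phi}$ is any normalized ground state, then $\bra{\phi}H\ket{\phi}=-1$, and combining this with the inequality of the previous step forces $|\inprod{\phi}{\psi}|^{2}\geq 1$; as both vectors are normalized, the equality case of Cauchy--Schwarz yields $\ket{\phi}=e^{i\theta}\pure$ for some phase. Therefore the ground space of $H$ equals $\Span\{\pure\}$, and $H$ is a QL Hamiltonian with $\pure$ as its UGS, as claimed.

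I do not anticipate any serious analytic obstacle here: once strong duality and attainment are granted, the conclusion is essentially a one-line consequence of the dual constraints. The real content sits in the two ingredients quoted rather than derived at this point, namely strong duality itself, which is where the affine structure of the constraints and Slater's condition enter (and which is carried out in Appendix~\ref{sec:SDP}), and the observation that the otherwise unmotivated constraint (i.b), $H+\dpure\geq 0$, is precisely what pins the ground energy to $-1$ and enforces uniqueness via Cauchy--Schwarz. It is worth emphasizing why the attainment hypothesis cannot be dropped: without it one only obtains a maximizing sequence of feasible QL operators $H_{n}$ with $\bra{\psi}H_{n}\pure\to-1$ whose operator norms diverge, so that no limiting QL Hamiltonian exists -- and this is exactly the gap that, as stressed in the main text, separates the UDA class from the UGS class.
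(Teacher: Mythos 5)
Your proof is correct and follows essentially the same route as the paper: strong duality plus attainment yields a QL operator $H$ with $\langle\psi|H\pure=-1$, and the dual feasibility constraint $H+\dpure\geq 0$, evaluated on an arbitrary state, pins the ground energy at $-1$ and forces $|\inprod{\phi}{\psi}|=1$ for any other ground state. The only (cosmetic) difference is that you phrase the uniqueness step via the operator bound $H\geq-\identity$ and the Cauchy--Schwarz equality case, while the paper runs the equivalent argument by contradiction with a hypothetical second ground state.
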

\begin{proof}
Since $\pure$ is UDA with respect to $\N$, the primal problem given in Table~\ref{tab:sdp}(a) has the optimal value $\alpha = 1$. The optimal solution is given by $\sigma = \dpure$ itself. Thanks to Slater's condition, the dual optimal value is also $\beta=-1$. Following our assumption, the latter is attained for some Hermitian operator that is QL relative to $\N$, say, $H_0$; accordingly, we have $ \tr{}{H_0\dpure} = -1$.

Assume that there exists $|\phi\rangle\langle\phi|\neq \dpure$ such that $|\phi\rangle \in g.s.(H_0)$, namely, $E_0\equiv\tr{}{H_0|\phi\rangle\langle\phi|} \leq -1$. Hence, 
\[ \tr{}{|\phi\rangle\langle\phi|(H_0+\dpure)} = \vert \langle \phi\pure\vert^2+E_0 <0, \]
as the overlap of the  two quantum states is strictly less than 1. This, however, contradicts the constraint (i.b) of the dual program given in Table~\ref{tab:sdp}(b) and hence the only ground state of $H_0$ must be $\pure$. 
\end{proof} 

Based on the above analysis, it follows that a UDA pure state is also the UGS of a QL Hamiltonian, provided the dual program attains its optimal value. However, attainment of the dual optimal value does not always happen, in which case the relevant UDA state can only belong to a degenerate ground space of a QL Hamiltonian. This is precisely what happens for the state $\qure_6$ we exhibited in Sec. \ref{sec:nonUDA}. 

On the contrary, we know [Proposition \ref{pr:ugsuda}] that pure states which are the UGS of QL Hamiltonians are always UDA relative to the same neighborhood structure. 
We can also see this using the SDP primal-dual relationship. If there exists a Hamiltonian $H_0$ that is QL relative to $\N$ with $\pure$ as its UGS, one can show that the dual program for $\pure$ is optimized by some $xH_0+y\mathbbm{1}$, with appropriate values for $x,y\in \mathbb{R}$ such that $\beta = 1$. This is possible because the energy gap between the ground and the first excited states of the Hamiltonian can be suitably scaled so that constraint (i.b) in Table~\ref{tab:sdp}(b) is satisfied for all quantum states with energy above the ground state $\pure$. In turn, this means that $\alpha = 1$ for the primal program, and hence $\pure$ is UDA with respect to $\N$, as expected.

\subsection{Generalized vs. QL linear constraints}

The result established in Theorem \ref{thm:UDA_UGS} naturally prompts the question of what features in the structure of the inputs and (or) the constraints may possibly make the assumption that the dual SDP attains its optimum value obeyed for any $\pure$, $\N$. As it is clear from the steps taken to construct the projector $\Phi_{\N}$ in Eq. \eqref{eq:qlp}, at the heart of our SDP primal problem is the fact that the QL constraint on $\mathcal{B}(\H)$ is {\em linear} in nature. Therefore, it is possible to write a similar SDP primal-dual pair for {\em any} general constraint as long as it acts linearly on $\mathcal{B}(\H)$, and arrive at similar conclusions to those of the QL case. For this reason, one may more generally ask whether (and when so) pure states which are UDA relative to some specified linear constraints are the UGS of some Hamiltonian which respects these same constraints. The corresponding optimization problems will look similar to the ones given in Table~\ref{tab:sdp}, with the QL projector $\Phi_{\N}$ replaced appropriately.  For instance, this generalized setting is applicable to scenarios where ``generalized locality'' constraints may be attributable to restrictions to a preferred subspace of observables (a preferred semi-simple Lie algebra in the simplest case) and ``generalized RDMs'' are constructed by restricting the linear functionals that represent global quantum states to determine only expectation values of observables in such a distinguished subspace \cite{Barnum}.

In Ref.~\cite{Chen2012}, a simple linear constraint in a two-qubit setting is considered, which forces linear operators to be of the form $M = \alpha H_1+\beta H_2$, where $\alpha,\beta \in \mathbb{C}$ and $H_1,H_2\in \mathcal{B}(\H)$ are fixed two-qubit operators. A pure state $\pure$ in this setting is UDA if there exists no other quantum state with the same set of expectation values as $\pure$ for $H_1,H_2$. It is easy to check that, for Hamiltonians of the form given above, if the ground state is unique it has to be UDA as well. However, the authors could show, by using geometric arguments, that there exist UDA pure states in this setting, which are not the UGS of any Hamiltonians of the appropriate kind. In SDP parlance, these cases would amount to an explicit violation of the assumption used in Theorem~\ref{thm:UDA_UGS}. However, prior to our analysis it was not clear whether such ``outlier'' UDA states could possibly be a peculiarity of the generalized setting in question, as opposed to the more structured one that pertains to QL constraints in multipartite systems, as we have focused on.

Mathematically, the QL setting is indeed special for a number of reasons. Notably, the mapping from a pure quantum state $\pure\langle\psi|$ to its set of RDMs is more complex than merely considering a set of expectation values; unlike in the latter case, each RDM is a positive-semidefinite operator not only on the reduced but also the full state space \cite{Cones}. Similarly, the subset of QL Hamiltonians have a richer structure in $\mathcal{B}(\H)$. The QL constraint depends on the underlying tensor product structure as well as the neighborhood structure chosen for the multipartite system, whereas linear constrains of the form considered in \cite{Chen2012} reflect only the total dimension and general geometric properties of $\mathcal{B}(\H)$. Notwithstanding this additional structure, our analysis shows that the question of whether UDA pure states are always UGS of appropriate Hamiltonians still has a negative answer in general, the reason being rooted in the failure of the dual SDP to achieve its maximum possible value.


\section{Conclusions and Outlook}
\label{sec:end}

We have elucidated aspects of the interplay between (quasi-)local and global properties in quantum states, in relation to developing better tools for characterizing sets of quantum states that are compatible with a given collection of quantum marginals and, more specifically, for better understanding the distinctive properties that pure states uniquely determined by such marginals enjoy.  Our main findings may be summarized as follows:

(i) For arbitrary $\rho$ and $\N$, respectively specifying the (pure or mixed) state of interest and the relevant QL constraint, the search for states consistent with the assigned neighborhood-RDMs can be restricted to a subset of the full space, specifically, to density operators with support on a subspace which is solely determined from knowledge of the {\em RDM supports} [Theorem \ref{th:DQLSsubspace}]. 
Such a DQLS subspace, which can be thought of as the minimal subspace that is asymptotically stabilizable using purely dissipative Markovian dynamics, may have a dimensionality significantly smaller than the full Hilbert space, possibly independent upon system size -- as we explicitly show for generalized W states. In the important case where the target state $\rho=|\psi\rangle\langle\psi|$ is pure, this may be further exploited to introduce the concept of a {\em UDA witness}, as an observable whose expectation is {\em uniquely extremized} by $\pure$ over the corresponding DQLS subspace [Theorem \ref{th:UDAwitness}]. As a by-product, generalized W states are shown to be UDA relative to arbitrary non-trivial QL constraints, extending and unifying existing results.  

(ii) While any non-degenerate ground state of a Hamiltonian that is QL relative to $\N$ is always UDA by its $\N$-RDMs, and while many UDA pure states are known to arise as the UGS of a corresponding QL parent Hamiltonian, we show this equivalence to be {\em false} in general -- disproving the conjecture that UDA alone suffices for $\pure$ to also be a UGS. We do so by exhibiting an analytic counter-example of a six-qubit pure state that is provably UDA by knowledge of its two-body RDMs [Theorems \ref{thm:UDA}] yet {\em cannot} be the non-degenerate UGS of any two-body Hamiltonian [\ref{thm:notUGS}].

(iii) We recast the problems of determining whether $\pure$ is UDA or, respectively, UGS relative to a give QL constraint as a {\em primal-dual pair} of linear programs in the general SDP framework. As a main result of this reformulation, we establish that a necessary and sufficient condition for a UDA pure state $\pure$ to be a UGS is for the dual problem to attain its optimal value [Theorem \ref{thm:UDA_UGS}]. Failure for this to happen, in spite of strong duality to be obeyed by the primal-dual pair, is identified as the mechanism precluding, in general, the equivalence between UDA and UGS to hold.

The present analysis leaves a number of interesting open questions for future investigation. In view of the fact that UGS pure states form a proper subset of UDA pure states, it is natural to ask what additional properties, on top of being UDA, a pure quantum state should obey in order for the equivalence to be possibly regained relative to a specified QL constraint. Or, even with UDA being the only property in place, it may be natural to ask instead what minimal ``coarse-grained'' neighborhood structure $\N'$ \cite{SaliniJPA} (if any) may allow for the UGS property to be guaranteed as well. From a SDP standpoint, elucidating these questions calls for a deeper understanding of the geometric shape that the optimization problem acquires as a result of the constraints -- in particular, the emergence and nature of non-exposed faces \cite{Chen2012,ChenJMP}. 

Finally, it is worth noting that, in many ways, UGSs of QL Hamiltonians ``mirror'' unique steady states of QL dissipative evolutions \cite{Zoller}; in particular, they may be prepared by using purely dissipative QL stabilizing dynamics if their parent Hamiltonian is frustration-free \cite{Ticozzi2012}. A related question one may thus ask is whether UGSs of QL Hamiltonians can always be stabilized with QL resources: interestingly, the answer turns out to be negative in this case also \cite{QLS}. 


\begin{acknowledgments}
It is a pleasure to thank Abhijeet Alase for useful discussions during this project. F.T. also wishes to thank Mattia Zorzi for valuable input about SDP. Work at Dartmouth was partially supported by the NSF under grant No. PHY-1620541, and the Constance and Walter Burke {\em Special Projects Fund in Quantum Information Science}.  F.T. has been partially funded by the QCOS project of the Universit\`a degli Studi di Padova.
\end{acknowledgments}

\begin{appendix}

\section{Technical proofs}
\label{proofs}
	
In this section, we present complete technical proofs of results stated in the main text. First, we provide an alternative proof of the UDA nature of arbitrary SLOCC equivalents to the W state [Theorem \ref{th:GW_UDA}], that exploits directly the structure of the DQLS subspace.

\medskip

\noindent 
{\bf Corollary III.5$'$}  
{\em SLOCC equivalent states of the $N$-qubit W state are UDA relative to any non-trivial neighborhood structure $\N$.}

\begin{proof}
Following Theorem~\ref{th:DQLSsubspace}, we know that
\[\supp(\sigma) \subseteq \H_{\N}(|{\rm GW}\rangle_N), \quad \forall \, \sigma \in \mathcal{M}_{\N}(|{\rm GW}\rangle_N). \]
The structure of $\H_{\N}(|{\rm GW}\rangle_N)$ is given by Eq.~\eqref{eq:GW_DQLS} in the main text. Hence we can express $\sigma$ in the following way: 
\[\sigma = p|\phi_0\rangle\langle\phi_0|+ (1-p)|\phi_1\rangle\langle\phi_1|, \quad 0\leq p\leq 1, \]
where $|\phi_0\rangle = \alpha |0\rangle_N +\beta |{\rm \overline{W}}\rangle_N$,  $|\phi_1\rangle = \beta^* |0\rangle_N -\alpha^* |{\rm \overline{W}}\rangle_N$ are a pair of orthonormal vectors with $|\alpha|^2+|\beta|^2=1$.
			
Let us now write the density operators $\rho \equiv |{\rm GW}\rangle_N\langle{\rm GW}|$ and $\sigma$ in terms of the states $|0\rangle_N, |{\rm \overline{W}}\rangle_N$:
\begin{eqnarray*}
\rho &= & c_0^2|0\rangle_N\langle 0|+
(1-c_0^2)|{\rm \overline{W}}\rangle_N\langle{\rm \overline{W}}| \\
&+& c_0\sqrt{1-c_0^2} \, \left(|0\rangle_N\langle{\rm \overline{W}}|+ |{\rm \overline{W}}\rangle_N\langle 0|\right) , \\
\sigma &= &\left( p|\alpha|^2+(1-p)|\beta|^2 \right) |0\rangle_N\langle 0|  \\
&+ & \left(p|\beta|^2+(1-p)|\alpha|^2\right) |{\rm \overline{W}}\rangle_N\langle{\rm \overline{W}}| \\
&+ & \left((2p-1) \alpha\beta^* |0\rangle_N\langle{\rm \overline{W}}|+ {\rm H.c.}\right) ,
\end{eqnarray*}
where H.c. denotes the Hermitian conjugate. 
Now, consider the RDMs of $\rho$ and $\sigma$ on some fixed $\N_j =\{k_1,\dots,k_L\}\in \N$. 
As these are identical by assumption, we denote the resulting operator by $\rho_{\N_j}$. Notice that:
(i) Terms of the form  $|k\rangle_L \langle k'|$ + H.c. arise only from the partial tracing of 
$|{\rm \overline{W}}\rangle_N\langle{\rm \overline{W}}|$.
(ii) Terms of the form  $(|k\rangle_L \langle 0|$ +H.c.)
arise only from the partial tracing $(|{\rm \overline{W}}\rangle_N\langle{0}|$+H.c.)
Hence the coefficients of these two terms must be equal in $\rho$ and $\sigma$. That is, it must be	
\begin{eqnarray}
p|\beta|^2+(1-p)|\alpha|^2 &=& 1-c_0^2,  
\label{GW1} \\
(2p-1) \alpha\beta^* & = &c_0\sqrt{1-c_0^2}, 
\label{GW2} 
\end{eqnarray}
where $\sqrt{1-c_0^2} > 0$, since it is the normalization constant used in the definition  of $|\overline{\rm W}\rangle_N$ (see  Eq.~\eqref{eq:GW} in the main text). Therefore, the right hand-side of Eq.~\eqref{GW2} is real and positive. This implies that $\alpha\beta^*$ is real, and the complex numbers $\alpha,\beta$ have the same phase factor which can be absorbed into the global phase of $|\phi_0\rangle,|\phi_1\rangle$. Without loss of generality, we then take $\alpha,\beta \in \mathbb{R}$ and $\alpha \geq 0$. 
		
From Eq.~\eqref{GW1}, it follows that 
\(c_0^2 =p|\alpha|^2+(1-p)|\beta|^2.  \)
Squaring Eq.~\eqref{GW2} and substituting for $c_0$ from this expression yields 
\[\alpha^2\beta^2(2p-1)^2 = (p^2+(1-p)^2)\alpha^2\beta^2+p(1-p)(\alpha^4+\beta^4), \]
which may be further simplified to:
\[2\alpha^2\beta^2p(p-1) = p(1-p)(\alpha^4+\beta^4). \]
For $p\not\in \{0,1\},$ we see that this yields $(\alpha^2+\beta^2)^2 =0,$ which is a contradiction. Therefore, it follows that either 
i) $p=0$, in which case $\alpha = \sqrt{1-c_0^2}$ and $\beta = -c_0$; or \\
ii) $p=1$, in which case $\alpha = c_0$ and $\beta = \sqrt{1-c_0^2}$. \\
Both conditions are equivalent to $\sigma =  |{\rm GW}\rangle_N\langle{\rm GW}|$.
This shows that $ |{\rm GW}\rangle_N$, and therefore the SLOCC class of the $N$-qubit W state, are UDA 
relative to any non-trivial $\N$, as claimed.
\end{proof}
	
\medskip

\noindent 
{\bf Theorem IV.2.} {\em 
$\qure_6$ is UDA relative to $\N_2$ as the QL operator 
\begin{equation}
{\mathbb W}_6\equiv  \!\!\sum_{\substack{(i_2-i_1)>1}}{ \!\!(\sigma_{i_1}^+\sigma_{i_2}^+ + \sigma_{i_2}^-\sigma_{i_1}^- })
\end{equation}
serves as a $2$-local UDA witness for the state. }

\begin{proof}
We aim to show that $\langle \Psi|{\mathbb W}_6 \qure_6 > \langle\phi| {\mathbb W}_6|\phi\rangle$, for all $|\phi\rangle\in \H_{\N_2}(\qure_6)$, where $ |\phi\rangle\langle\phi|\neq\qure_6\langle\Psi|$. It is clear that ${\mathbb W}_6$ is Hermitian and QL relative to $\N_2$.
	
Let a general normalized pure state $|\phi\rangle\in\H_{\N_2}(\pure)$ be expressed as follows (up to a global phase factor):
\begin{eqnarray}
\label{eq:phi}
|\phi\rangle & = &a_0|0\rangle_6+ (a_{135}+ib_{135})|135\rangle_6+(a_{246}+ib_{246})|246\rangle_6 \nonumber \\ 
&+& \sum_{i_1}(a_{i_1}+ib_{i_1})|i_1\rangle_6  
+  \hspace{-1ex}\sum_{(i_2-i_1) >1}\hspace{-1ex}(a_{i_1i_2}+ib_{i_1i_2})|i_1i_2\rangle_6, \nonumber  
\end{eqnarray}
where all the expansion coefficients are chosen to be real and, due to normalization, they obey
\(\sum_{i_1}(a_{i_1}^2+b_{i_1}^2)+
\sum_{i_2-i_1 >1} (a_{i_1i_2}^2+b_{i_1i_2}^2)+ a_{135}^2+b_{135}^2+a_{246}^2+b_{246}^2 = 1.\)
We now show that  $\qure_6$ is the unique state with maximum expectation value for ${\mathbb W}_6$, among all the pure 
states in $\H_{\N_2}(\qure_6)$. We use the method of Lagrangian multipliers. Define vectors 
\begin{align}
\vec{a} &\!\equiv\! \{a_0, a_{i_1}, a_{i_1i_2}, a_{135},a_{246}: i_\ell = 1,\dots,6,\, i_2-i_1>1\},\!
\label{eq:a} \\
\vec{b} &\!\equiv \!\{b_{i_1}, b_{i_1i_2}, b_{135},b_{246}: i_\ell = 1,\dots,6,\, i_2-i_1>1\}, \!
\label{eq:b}
\end{align}  
where $\ell =1, 2$, and two functions $f$ and $h$ as follows: 
\begin{eqnarray}
f(\vec{a},\vec{b}) &\equiv & 2\Big( a_0\!\! \!\sum_{(i_2-i_1)>1} \!\!a_{i_1i_2} \label{eq:f}\\
 &+& a_{135}(a_1+a_3+a_5) + a_{246}(a_2+a_4+a_6) \nonumber \\
 &+& b_{135}(b_1+b_3+b_5) +b_{246}(b_2+b_4+b_6)\Big), \nonumber \\
h(\vec{a},\vec{b}) &\equiv &\vert \vec{a}\vert^2+\vert \vec{b}\vert^2. \nonumber
\end{eqnarray}
Note that $f(\vec{a},\vec{b}) = \langle \phi| {\mathbb W}_6|\phi\rangle$ for all $|\phi\rangle \in \H_{\N_2}(\qure_6)$ following Eqs.~\eqref{eq:phi}-\eqref{eq:f}. Thus, we can also use $f(|\phi\rangle)$ to represent the expectation of ${\mathbb W}_2$ with respect to $|\phi\rangle$.
	
Our aim is to maximize $f(\vec{a},\vec{b})$ subject to the constraint that $h(\vec{a},\vec{b}) = 1$, which represents the normalization condition on $|\phi\rangle$ in Eq.~\eqref{eq:phi}. Consider the Lagrangian
\[
\L(\vec{a},\vec{b},\lambda) = f(\vec{a},\vec{b})+\lambda \,h(\vec{a},\vec{b}), \quad \lambda\in\mathbb{R}. \] 
We wish to solve the equations given by
\(\nabla_{\vec{a},\vec{b},\lambda}\L = 0,\)
where $\nabla_{\vec{a},\vec{b},\lambda}$ represents the partial derivatives of $\L(\vec{a},\vec{b},\lambda)$ with respect to each real-valued component of the vectors $\vec{a},\vec{b}$, given by Eqs.~\eqref{eq:a}-\eqref{eq:b}, as well as the scalar variable $\lambda$. Explicitly, the resulting set of equations reads as follows:
\begin{eqnarray}
	&&\!\!\! 2 \lambda a_0 + 2\sum_{(i_1-i_2 ) >1} a_{i_1i_2} = 0 ,
	\label{eq:1}\\
	&&\!\!\! 2a_0+2\lambda a_{i_1i_2} = 0,\quad \forall \,(i_1-i_2) >1 ,
	\label{eq:2} \\
	&&\!\!\! 2\lambda b_{i_1i_2} = 0, \quad \forall \,(i_1-i_2) >1 ,
	\label{eq:3} \\
	&&\!\!\!2(a_{i_1}+a_{i_2}+a_{i_3})+2\lambda a_{i_1i_2i_3} = 0, \nonumber \\ 
	&&\; \{i_1,i_2,i_3\} = \{1,3,5\},\{2,4,6\} ,
	\label{eq:4} \\
	&&\!\!\!2a_{i_1i_2i_3} + 2\lambda a_{i} = 0, \nonumber \\ 
	&&\;\{i_1,i_2,i_3\} = \{1,3,5\},\{2,4,6\},\, i = i_1,i_2,i_3, 
	\label{eq:5} \\
	&&\!\!\!2(b_{i_1}+b_{i_2}+b_{i_3})+2\lambda b_{i_1i_2i_3} = 0, \nonumber \\ 
	&&\; \{i_1,i_2,i_3\} = \{1,3,5\},\{2,4,6\}, 
	\label{eq:6} \\
	&&\!\!\!2b_{i_1i_2i_3} + 2\lambda b_{i} = 0, \nonumber  \\ 
	&&\; \{i_1,i_2,i_3\}= \{1,3,5\},\{2,4,6\},\,i = i_1,i_2,i_3.
	\label{eq:7}
\end{eqnarray}
We first solve Eqs.~\eqref{eq:1}-\eqref{eq:2} in order to find the feasible values for $a_0,\lambda$. 
This leads us to the following cases:

\smallskip

$\bullet$ {\bf Case 1:} $a_0 \neq 0, \lambda = \pm 3$.	
Plugging in the relevant values, we get $a_{i_1i_2} = \pm a_0/3$ in Eq. \eqref{eq:2} and $b_{i_1i_2}=0$ in Eq.~\eqref{eq:3}, for all $(i_1-i_2) >1$. One may also verify that for the particular choice of $\lambda$, Eqs.~\eqref{eq:4}-\eqref{eq:5} and Eqs.~\eqref{eq:6}-\eqref{eq:7} are only satisfied if we set $a_{i_1i_2i_3},b_{i_1i_2i_3} = 0$ for all suitable $\{i_1,i_2,i_3\}$. As a consequence, the corresponding $a_i,b_i = 0$ in these equations.
		
The  solutions corresponding to this case are (up to an overall phase) given by:
\[|\phi_1^{\pm}\rangle =  a_0(|0\rangle_6 \pm |\overline{\text{D}}\rangle_6),\]
where $|\overline{\text{D}}\rangle_6$ is defined in Eq.~\eqref{eq:barD} of the main text. Upon normalizing we get 
$a_0 = 1/\sqrt{2}$. The corresponding functional value can be verified to be \(f(|\phi_1^{\pm}\rangle)= \pm 3.\) Since we are looking for the maximum of $f$, we only retain the solution $|\phi_1^{+}\rangle$, which is nothing but $\qure_6$ itself.
		
\smallskip

$\bullet$ {\bf Case 2:} $\lambda=\pm\sqrt{3}, a_0 = 0$.
Solving for this case leads to $b_{i_1i_2} = 0$ in Eq.~\eqref{eq:3}, for all $(i_2-i_1)>1$. Next, focus on Eqs.~\eqref{eq:4}-\eqref{eq:5}. Solving them yields $a_i =  a_{i_1i_2i_3}/\lambda$ for appropriate values of $i,\{i_1,i_2,i_3\}$. Similar relations hold for $b_{i_1i_2i_3}, b_i$, following Eqs.~\eqref{eq:6}-\eqref{eq:7}. Accordingly, there are two different solutions emerging from this case, which correspond to the two different $\lambda$ values. Their common  form is given below (again up to an overall phase):
\begin{eqnarray*}
|\phi_2^{\pm}\rangle &=& (a_{135}+ib_{135})(|135\rangle_6 \pm \frac{1}{\sqrt{3}}(|1\rangle_6+|3\rangle_6+|5\rangle_6)\\
&+& (a_{246}+ib_{246})(|246\rangle_6 \pm \frac{1}{\sqrt{3}}(|2\rangle_6+|4\rangle_6+|6\rangle_6), 
\end{eqnarray*}
together with the appropriate normalization condition, 
\(2(a_{135}^2+b_{135}^2+a_{246}^2+b_{246}^2) =1. \)
Computing the functional values in this case gives, 
\(f(|\phi_2^{\pm}\rangle) = \pm \sqrt{3}. \) Since both these values are lower than $f(|\phi_1^{+}\rangle)$ from 
Case 1, we discard the solutions obtained from this case.

\smallskip

$\bullet$ {\bf Case 3:} $\lambda = 0$. It follows that $a_0 = 0$ and \(\sum_{(i_2-i_1) >1}a_{i_1 i_2} = 0,  \)	after solving Eqs.~\eqref{eq:1}-\eqref{eq:2}. We are free to choose arbitrary values for $a_{i_1i_2}$, as long as the above relation is satisfied. Similarly, the variables given by $b_{i_1i_2}$ are also chosen freely, with no constraints, following Eq.\eqref{eq:3}. Solving Eqs.~\eqref{eq:4}-\eqref{eq:5} sets $a_{i_1i_2i_3} = 0$, with \(a_{i_1}+a_{i_2}+a_{i_3} = 0. \) Similar relations hold for $b_{i_1i_2i_3}, b_i$ in Eqs.~\eqref{eq:6}-\eqref{eq:7}. However, one can easily verify that the functional value in Eq.~\eqref{eq:f} is always 0 irrespective of the choices available for the non-zero variables. Therefore, we discard this case as well.

\smallskip

In summary, we conclude that the unique state which maximizes the expectation value of ${\mathbb W}_6$ is $\qure_6$, following the analysis given in Case 1. Hence ${\mathbb W}_6$ is a UDA witness for $\qure_6$ for the neighborhood structure $\N_2$ by Theorem.~\ref{th:UDAwitness}. 
\end{proof}

\medskip

\noindent 
{\bf Theorem IV.4.}
{\em There exists no Hamiltonian $\overline{H}$ that is QL relative to $\N_2$ and symmetrized with respect to the group $\mathcal{G}_6$, such that $\overline{H}$ has $\qure_6$ as its UGS.}

\begin{proof}
Since $\overline{H}$ is QL relative to $\N_2$, it can be written as 
\[\overline{H} = \sum_{\N_k\in\N_2} \overline{H}_{\N_k} \otimes I_{\overline{\N}_k },\]
where each $\overline{H}_{\N_k}$ is a two-qubit Hamiltonian. Since we are interested in the action of $\overline{H}$ on $\qure_6$, which is a superposition of states with specified (zero and two) excitations, it is convenient to represent $\overline{H}$ in terms of a product operator basis that makes the creation or annihilation of excitations explicit. Thus, we represent $\overline{H}$ in the following form:
\begin{equation}
\label{eq:barH_decomp}
\overline{H} = \overline{H}_2+\overline{H}_{-2}+\overline{H}_1+\overline{H}_{-1}+\overline{H}_0.	
\end{equation}
Here, $\overline{H}_2$ is a QL operator that creates two excitations and is composed of operators of the form 
\(\sigma^+_i\sigma^+_j \equiv |11\rangle \langle 00|,\, i,j\in\{1,\dots,N\}. \) The term that destroys two excitations is $\overline{H}_{-2} = \overline{H}_2^\dagger$. Similarly, we have $\overline{H}_1 ( \overline{H}_{-1}$) for creating (destroying) one excitation, with $\overline{H}_1 = \overline{H}_{-1}^\dagger$. $\overline{H}_1$ is composed of two-qubit operators of the form $\sigma^z_i\sigma^+_j$ or $\sigma^+_i\sigma^z_j$, or one-qubit operators $\sigma^+_i$. $\overline{H}_0$ is the excitation-preserving term, which consists of $\sigma^z$ operators in one or two qubits, operators of the form $\sigma^+_i\sigma^-_j$ or $\sigma^-_i\sigma^+_j$, or simply the identity operator. Clearly, $\overline{H}_0$ is Hermitian.
	
It is evident from the above analysis that the different terms in Eq.~\eqref{eq:barH_decomp} are linearly independent of each other. Therefore, they are all individually symmetric relative to the group $\mathcal{G}$ as their sum is. As these terms must then be invariant under cyclic permutations as well as reflection of qubits, they belong to operators subspaces spanned by the following basis sets:
\begin{eqnarray}
\hspace*{-7mm}&& \overline{H}_2 \!\in \Span_{\mathbb{R}}\{ P(\sigma^+_{1}\sigma^+_2\otimes I_{3456}),P(\sigma^+_{1}\sigma^+_3\otimes I_{2456}), 
 \label{eq:span_2exci} \\ 
\hspace*{-6mm}&& \hspace*{19mm}P(\sigma^+_{1}\sigma^+_4\otimes I_{2356})\!\},  \nonumber \\
\hspace*{-6mm}&& \overline{H}_1 \!\in \Span_{\mathbb{R}}\{P(\sigma^+_{1}\otimes I_{23456}),RP(\sigma^+_{1}\sigma^z_2\otimes I_{3456}), \label{eq:span_1exci} \\ 
\hspace*{-6mm}&& \hspace*{16mm}RP(\sigma^+_{1}\sigma^z_3\otimes I_{2456}), RP(\sigma^+_{1}\sigma^z_4\otimes I_{2356})\!\}, 
\nonumber  \\
\hspace*{-6mm}&& \overline{H}_0 \!\in \Span_{\mathbb{R}}\{ P(\sigma^z_{1}\sigma^z_2\otimes I_{3456}),P(\sigma^z_{1}\sigma^z_3\otimes I_{2456}),
\nonumber \\ 
\hspace*{-6mm}&& \hspace*{19mm}P(\sigma^z_{1}\sigma^z_4\otimes I_{2356}), P(\sigma^z_1\otimes I_{23456}),  \nonumber \\ 
\hspace*{-6mm}&& \hspace*{16mm}RP (\sigma_1^+\sigma_2^-\otimes I_{3456}), RP (\sigma_1^+\sigma_3^-\otimes I_{2456}), \nonumber \\
\hspace*{-6mm}&& \hspace*{16mm}RP (\sigma_1^+\sigma_4^-\otimes I_{2356}) \!\}. \nonumber 
\end{eqnarray}	
Without loss of generality we can assume that $\overline{H}\geq 0$ and, therefore, $\overline{H}\qure_6 = 0$ as it belongs to the ground-state space. Since $\qure_6 = 1/\sqrt{2}(|0\rangle_6+|\overline{\text{D}}\rangle_6)$ (see Eq.~\eqref{eq:barD} in the main text for the form of $|\overline{\text{D}}\rangle_6$), we infer that 
\begin{align}
\overline{H}_{2}|\overline{\text{D}}\rangle_6 &= 0 , \label{eq:2exci} \\
\overline{H}_{1}|\overline{\text{D}}\rangle_6 &= 0 ,  \label{eq:1exci} \\
\overline{H}_2|0\rangle_6 &= -\overline{H}_{0}|\overline{\text{D}}\rangle_6 , \label{eq:2exci0}\\
\overline{H}_1|0\rangle_6 &= -\overline{H}_{-1}|\overline{\text{D}} \rangle_6 ,\label{1exci}\\
\overline{H}_0|0\rangle_6 &= -\overline{H}_{-2}|\overline{\text{D}}\rangle_6 . \label{eq:0exci0}
\end{align}
Note that $\overline{H}_{-2}|0\rangle_6 = 0 = \overline{H}_{-1}|0\rangle_6 $ are trivially obeyed.

We first focus on solving Eq.~\eqref{eq:2exci}. Consider an eigenbasis of the unitary operator $P$, given by $\{|\phi_k\rangle \}$. Being $P$ a permutation operator, it does not create or destroy excitations when acting on any quantum state. For this reason and by exploiting the degeneracy in the spectrum of $P$, we may choose the eigenbasis of $P$ in such a way that each $|\phi_k\rangle$ is a linear combination of terms with a well-defined number of excitations, ranging between zero and six. We thus must have
\begin{equation}
\label{eq:phi_k}
\langle \phi_k|\overline{H}_{2}|\overline{D}\rangle_6 = 0,\quad \forall k. 
\end{equation}
Recalling that $|\overline{D}\rangle_6$ is a two-excitation state, Eq.~\eqref{eq:phi_k} is trivially satisfied unless $\{|\phi_k\rangle \}$  consists of four-excitation terms. Thus, we focus on $|\phi_k\rangle$s that are solely composed of four-excitation terms and examine what restriction they impose on the structure of $\overline{H}_2$. Notice that
\begin{equation*}
\langle \phi_k |\overline{H}_{2}|\overline{D}\rangle_6 = \langle \phi_k |P^\dagger\overline{H}_{2}P|\overline{D}\rangle_6 = \lambda_k\langle \phi_k |\overline{H}_{2}|\overline{D}\rangle_6 . 
\end{equation*}
The first equality holds because $\overline{H}_{2}$ is invariant under the action of $\mathcal{G}$. The second equality follows from the fact that $|\overline{D}\rangle_6$ is invariant under the action of $P$ and $|\phi_k\rangle$ is the eigenstate of $P$ with eigenvalue $\lambda_k$. This shows that when $\lambda_k \neq 1$, Eq.~\eqref{eq:phi_k} is automatically be satisfied without imposing any additional constraint on $\overline{H}_{2}$. Therefore, we further restrict our attention to the following states for which the eigenvalue $\lambda_k = 1$:
\begin{equation}
\label{eq:4exci}
|\phi_k\rangle \in \{P(|3456\rangle_6) ,P(|2456\rangle_6),P(|2356\rangle_6) \}. 
\end{equation}
Following Eq.~\eqref{eq:span_1exci}, let
\begin{eqnarray*}
\overline{H}_2 &\equiv & a_1P(\sigma^+_{1}\sigma^+_2\otimes I_{3456})+a_2 P(\sigma^+_{1}\sigma^+_3\otimes I_{2456}) \\ 
&+ &a_3 P(\sigma^+_{1}\sigma^+_4\otimes I_{2356}),
\end{eqnarray*} 
where $a_1,a_2,a_3\in\mathbb{R}$ are treated as unknowns. We then obtain three equations of the form $\langle \phi_k|\overline{H}_{2}|\overline{D}\rangle_6 = 0$, corresponding to the three states in Eq.~\eqref{eq:4exci}. This set of equations can be rewritten in a matrix form $A\vec{a}=\textbf{0}$, where each entry of the matrix is given by $A_{jk} = \langle \phi_k|P_j|\overline{D}\rangle$, with $P_i$ belonging to the set in Eq.~\eqref{eq:span_2exci} and $\vec{a} = (a_1,a_2,a_3)^T$. By evaluating the matrix $A$ with Matlab, we found that $\det A \neq 0$ and these equations are simultaneously satisfied only for $a_1=a_2=a_3=0$. Accordingly, $\overline{H}_2 = 0$, and similarly for $\overline{H}_{-2}$.
	
Next we carry out a similar analysis for Eq.~\eqref{eq:1exci}. Based on Eq.~\eqref{eq:span_1exci}, we parametrize $\overline{H}_1$ as follows:
\begin{eqnarray*}
\overline{H}_1 & \equiv & b_1P(\sigma^+_{1}\otimes I_{23456})+b_2RP(\sigma^+_{1}\sigma^z_2\otimes I_{3456}) \\ &+& b_3RP(\sigma^+_{1}\sigma^z_3\otimes I_{2456})+b_4 RP(\sigma^+_{1}\sigma^z_4\otimes I_{2356}),
\end{eqnarray*} 
with $b_1,b_2,b_3,b_4\in\mathbb{R}$ treated as unknowns. We observe that $\langle \phi_k|\overline{H}_{1}|\overline{D}\rangle_6 = 0$, similar to Eq.~\eqref{eq:2exci}, for $|\phi_k\rangle$ in the eigenbasis of $P$. However, this time we only consider
\begin{equation}
\label{eq:3exci}
|\phi_k\rangle \in \{P(|123\rangle_6) ,P(|134\rangle_6),P(|135\rangle_6) \}, 
\end{equation}
since the most general form of $\overline{H}_{1}|\overline{D}\rangle_6 $ can only have three excitations present. We can then form a matrix $B$ with elements $B_{jk} = \langle \phi_k|P_j|\overline{D}\rangle$, for $P_j$ belonging to the set in Eq.~\eqref{eq:span_1exci}. The set of equations is rewritten as $B\vec{b} = \textbf{0}$ with $\vec{b} = (b_1,b_2,b_3,b_4)^T$. In this case, one can verify that there exists one non-trivial solution for $\vec{b}=(-1,1,1,1)^T$, such that $\langle \phi_k|\overline{H}_{1}|\overline{D}\rangle_6 = 0$ holds for the choice of $|\phi_k\rangle$ in Eq.~\eqref{eq:3exci}. However, one may also verify that this solution fails to satisfy Eq.~\eqref{1exci} and therefore is not a valid choice for $\overline{H}_1$. Accordingly, we are left with $\overline{H}_1={0}$, and similarly for $\overline{H}_{-1}$.  

Based on the above analysis, we conclude that the relevant QL parent Hamiltonian $\overline{H}$ is excitation-preserving,  $\overline{H}= \overline{H}_0$. Then the only non-trivial equations that remain are Eqs.~\eqref{eq:2exci0},~\eqref{eq:0exci0}, which in turn imply that 
\begin{equation*}
\overline{H}|\overline{D}\rangle_6=0 = \overline{H}|0 \rangle_6.
\end{equation*}
This shows that the ground-state space of $\overline{H}$ is at least two-dimensional. Thus, $\qure_6$ is not UGS of any $\overline{H}$ that is QL relative to $\N_2$ and invariant under $\mathcal{G}$. 
\end{proof}

\section{Derivation of the dual problem}
\label{sec:SDP}

In many cases, in order to solve an SDP problem, it is convenient to derive its dual counterpart. In essence, this amounts to write a parametric lower bound on the primal problem, and maximize such lower bound on the set of parameters. The interest in this accessory, in some sense ``relaxed'' optimization problem may generally stem from two reasons: First, the new objective function is concave even if the original cost function is not convex, making the dual problem more tractable; second, under some conditions on the constraints, it can be shown that the optimal value for the dual functional corresponds to the optimal primal cost. Here, we explicitly construct the dual problem of our SDP check for UDA pure states, following the general approach presented in \cite{Boyd}.

In the primal problem, where the objective is to determine whether a pure state $\pure$ is UDA, we aim to minimize the functional $f(\sigma,\rho) = \text{tr}(\rho\sigma),$ with $\rho=\dpure,$ over the set of Hermitian positive-semidefinite matrices subject to a set of linear constraints subsumed in the linear map $\Phi_{\cal N}$:

\begin{center}
\begin{tabular}{l l}
	{\tt Minimize:} & $\text{tr}[\rho\sigma]$ \\
	{\tt subject to:} 	& $\Phi_{\cal N}(\sigma)=\Phi_{\cal N}(\rho),$ \\
	& $\sigma \geq 0.$
\end{tabular}
\end{center}

\noindent 
Recall that the minimum always exists since we are optimizing a linear function over a convex, non-empty bounded set ${\cal A}$, which is the admissible set for our problem. Let us call its value  $\alpha \equiv \min_{\sigma\in{\cal A}}f(\sigma, \rho)$. 

In order to find the lower bounds, that is, functionals of the dual problem, we first construct the Langrangian of the primal problem, by essentially incorporating the constraints as penalties in the  primal cost function:
$$ {\cal L} (\sigma,\lambda,\nu) \equiv  \text{tr}\Big[\rho\sigma-\lambda\sigma+H \Big(\Phi_{\cal N}(\sigma)-\Phi_{\cal N}(\rho) \Big) \Big], $$ 
with $\lambda \geq 0$ and $H$ Hermitian. The Lagrangian $ {\cal L}$ constructed in this way is such that for a given $\sigma  \geq 0$, its value is always lesser than or equal to that of $f$ by appropriately choosing the dual variable $H$. In order for the comparison to make sense, $H$ is consistently restricted to be Hermitian. The Lagrange dual function for $f(\sigma, \rho)$ is then given by 
$$g(\lambda,H) \equiv \inf_{\sigma \geq 0}  {\cal L}(\sigma,\lambda,H). $$ 
If $\tilde{\sigma}$ denotes a feasible point for the function $f(\sigma, \rho$) , that is, $\tilde{\sigma} \in \mathcal{A}$, it can be shown that 
$$f(\tilde{\sigma}) \geq  {\cal L} (\tilde{\sigma},\lambda,\nu) \geq 
\inf_{\sigma \in \mathcal{D}(\mathcal{H})}  {\cal L}(\sigma,\lambda,\nu) = g(\lambda,\nu) .$$ 
The first inequality follows by looking at the definition of the Lagrangian. It is the sum of $f(\sigma)$ and two other terms, of which $\text{tr}(-\lambda\sigma)$ is always negative for any feasible $\tilde{\sigma}$, since $\lambda \geq 0$ by choice. The second term, $\text{tr}[\nu(\Phi_{\cal N}(\sigma)-\Phi_{\cal N}(\rho))]$ is zero for any allowed $\tilde{\sigma}$ because of the equality constraint in the primal problem, and hence it does not contribute to $ {\cal L}$. Thus, $f(\tilde{\sigma}) \geq  {\cal L}(\tilde{\sigma},\lambda,H)$ for $\tilde{\sigma} \in \mathcal{A}$ and $\lambda \geq 0$. The second inequality is obvious. 

The dual problem is obtained by looking at the {\em best} lower bound that we can derive for the primal optimum using the Lagrange dual function. Let us denote by $\Phi^\dag_{\cal N}$ the dual of $\Phi_{\cal N}$. Notice that, in our case, one can in fact see that $\Phi^\dag_{\cal N}=\Phi_{\cal N}$, as both these maps orthogonally project their arguments to the corresponding QL coordinates. First, we rewrite the terms composing the function $g,$ so that its structure is more explicit:
\begin{align*}
g(\lambda,H) &= \text{tr}[ -H\Phi_{\cal N}(\rho))] +\inf_{\sigma \geq 0}\text{tr} [(\rho-\lambda)\sigma+H\Phi_{\cal N}(\sigma)]  \\
& = \text{tr}[-\Phi^\dag_{\cal N}(H)\rho]+\inf_{\sigma \geq 0}\text{tr} [(-\rho+\lambda+\Phi^\dag_{\cal N}(H))\sigma].     
\end{align*}
Accordingly, the Lagrange dual problem for our  $g(\lambda,H)$ can be written as:

\begin{center}
\begin{tabular}{l l}
	{\tt Maximize:} & $\text{tr}[\Phi^\dag_{\cal N}(-H)\rho]$ \\
	{\tt subject to:} & $H =H^\dagger$, \\
	& $\lambda \geq 0$
\end{tabular}
\end{center}

\noindent 
It can be seen that unless 
\begin{equation}
\label{const}
\rho-\lambda+\Phi^\dag_{\cal N}(H)=0,
\end{equation}
$g(\lambda,H)$ can be made to go to $-\infty$ by suitably choosing $\sigma$. So, we include $(\rho-\lambda+\Phi^\dag_{\cal N}(H))=0$ as a constraint. Combining this with the inequality constraints, one can write $\rho+\Phi^\dag_{QL}(H) \leq 0$  by eliminating $\lambda$, because for any $\lambda \geq 0$ a suitable $H$ satisfying Eq.~\eqref{const} will also satisfy this inequality. Also notice that $H$ appears in this dual optimization problem through $\Phi^\dag_{\cal N}(H)$. Hence, for any solution $H$, its QL projection $\Phi_{\cal N}(H)$ would work as well. In view of this, we may introduce an additional constraint,  
$\Phi^\dag_{\cal N}(H) = H, $
and simply replace $\Phi^\dag_{\cal N}(H)$ with $H$ everywhere else in the dual problem. Finally, the desired form for the dual problem is obtained, as given in Table I(b) in the main text:

\begin{center}
	\begin{tabular}{l l}
		{\tt Maximize:} & $-\text{tr}(H\rho)$, \\
		{\tt subject to:} & $H+\rho\geq 0$, \\
		& $H = \Phi_{\cal N}(H),$ \\
		& $H =H^\dag.$ 
	\end{tabular} 
\end{center}

Under the working assumptions of our problem, it is possible to show that {\em strong duality} holds, which means that the optimal value of dual problem is not just a lower bound for the primal (weak duality), but they are in fact {\em equal} under the active constraints. This can be shown by resorting to the refined version of Slater's condition for a SDP with affine inequality constraints (see e.g. \cite{Boyd}, Sec. 5.2.3): in this case, feasibility of the primal problem is enough to guarantee $\alpha=\beta.$

However, as we remark in Sec. \ref{sec:sdp}, it is important to notice that, unlike the primal problem, the dual problem requires optimization over an unbounded set. In this case, even with Slater's condition there is {\em no} guarantee that the optimal value is reached for a bounded solution. In fact, we have shown in Sec. \ref{sec:equivalence} that it is possible to have UDA states that are not UGS of any Hamiltonian respecting the same QL constraint -- which is equivalent to the dual problem having no bounded solution.

\end{appendix}


\end{document}